\documentclass[11pt]{article}

\usepackage[margin=1in]{geometry}
\usepackage[T1]{fontenc}
\usepackage{lmodern}
\usepackage{microtype}
 
\usepackage{amsmath, amssymb, amsthm, mathtools}
\usepackage[hidelinks]{hyperref}
\usepackage[nameinlink,capitalise,noabbrev]{cleveref}
\usepackage{xcolor}
\usepackage{tikz}
\usetikzlibrary{shapes,decorations.pathmorphing,calc,patterns,shadings}
\usepackage{pathslib}
\usepackage{algorithm}
\usepackage[noend]{algpseudocode}

\input{flowerlib.sty}

\allowdisplaybreaks[1]

\theoremstyle{plain}
\newtheorem{theorem}{Theorem}
\newtheorem*{theorem*}{Theorem}
\newtheorem{lemma}{Lemma}
\newtheorem{claim}{Claim}

\newtheorem{corollary}{Corollary}

\newtheorem{definition}{Definition}
\newtheorem{remark}{Remark}

\theoremstyle{definition}

\theoremstyle{remark}

\newcommand{\avec}{\vec{\alpha}}
\newcommand{\Gnhalf}{\mathbb{G}(n,1/2)}
\newcommand{\greedy}{\textsc{Greedy} }
\newcommand{\done}{\textsf{AddedPair}}
\newcommand{\T}{\textsf{True}}
\newcommand{\F}{\textsf{False}}
\newcommand{\E}{\mathbb{E}}
\newcommand{\ER}{Erd\H{o}s--R\'{e}nyi }
\newcommand{\N}{\mathbb{N}}

\newcommand{\eps}{\varepsilon}
\newcommand{\MCS}{\mathsf{LCIS}}

\newcommand{\bits}{\{0,1\}}

\newcommand{\Alg}{\mathcal{A}}
\newcommand{\findOneLarge}{\mathcal{E}}
\newcommand{\findManyLarge}{\mathcal{S}}
\newcommand{\online}{\mathsf{ONLINE}}
\newcommand{\lineref}[1]

\usepackage[
  backend=biber,
  style=alphabetic,   %
  maxbibnames=99,
]{biblatex}
\renewbibmacro{in:}{}

\renewcommand{\Pr}{\mathbb{P}}

\title{Optimal Hardness of Online Algorithms\\ for Large Common Induced Subgraphs}
\author{David Gamarnik\thanks{MIT; \url{gamarnik@mit.edu}}
\and Mikl\'os Z. R\'acz\thanks{Northwestern University; \url{miklos.racz@northwestern.edu}} 
\and Gabe Schoenbach\thanks{University of Chicago; \url{gschoenbach@uchicago.edu}}}
\date{\today}

\begin{document}
\maketitle

\begin{abstract}
We study the problem of efficiently finding large common induced subgraphs of two independent Erd\H{o}s--R\'enyi random graphs $G_1, G_2 \sim \mathbb{G}(n,1/2)$. Recently, Chatterjee and Diaconis~\cite{CD23} showed that the largest common induced subgraph of $G_1$ and $G_2$ has size $(4-o(1))\log_2 n$ with high probability. We first show that a simple greedy online algorithm finds a common induced subgraph of $G_1$ and $G_2$ of size $(2-o(1)) \log_2 n$ with high probability. Our main result shows that no online algorithm can find a common induced subgraph of $G_1$ and $G_2$ of size at least $(2+\varepsilon) \log_2 n$ with probability bounded away from $0$ as $n \to \infty$. Together, these results provide evidence that this problem exhibits a computation-to-optimization gap. To prove the impossibility result, we show that the solution space of the problem exhibits a version of the (multi) overlap gap property (OGP), and utilize an interpolation argument recently developed by Gamarnik, K{\i}z{\i}lda\u{g}, and Warnke~\cite{GKW25} that connects OGP and online algorithms.  
\end{abstract}

\section{Introduction}
In the \emph{largest common induced subgraph} ($\MCS$) problem, the input is a pair of graphs $G_{1}$ and~$G_{2}$, 
and the goal is to find a common induced subgraph of $G_{1}$ and $G_{2}$ with as many vertices as possible. 
A \emph{solution} $\sigma = (H_1, H_2)$ is any pair of induced subgraphs $H_1 \subseteq G_1$ and $H_2 \subseteq G_2$ such that $H_1 \cong H_2$, where $\cong$ denotes graph isomorphism. The \emph{size} of the solution $|\sigma| := |H_1| = |H_2|$ is the number of vertices of the common subgraph and the goal is to maximize this quantity. 

The $\MCS$ problem is a fundamental optimization problem with close connections to other central problems such as graph isomorphism and graph matching, 
and it arises naturally in many applications, 
such as bioinformatics~\cite{ehrlich2011maximum}, chemoinformatics~\cite{schietgat2013polynomial}, and pattern recognition~\cite{conte2004thirty}. 
It is computationally hard in the worst case, even to approximate within any reasonable factor~\cite{kann1992approximability}.

In this paper we study arguably the simplest and most natural average-case version of the $\MCS$ problem, 
where 
$G_{1}, G_{2} \sim \Gnhalf$ 
are independent \ER random graphs with $n$ vertices and edge probability $1/2$; we refer to this problem as $\MCS(n)$. 
Recently, Chatterjee and Diaconis~\cite{CD23} showed that the largest solution has size $(4 - o(1))\log_2 n$ with high probability as $n \to \infty$.\footnote{In fact, they proved a much more precise two-point concentration result; however, in this paper we primarily focus on the leading order term.}\textsuperscript{,}\footnote{Throughout the paper, we say that an event holds with high probability if its probability tends to $1$ as $n \to \infty$.} 

In this work, we initiate the study of the algorithmic side of the $\MCS(n)$ problem. We ask: %

\begin{center}
    \emph{What are the limits of efficient algorithms that solve $\MCS(n)$ with high probability?}
\end{center}

In particular, we focus on a broad class of online algorithms, which arise naturally for this problem. 
We first show that a simple greedy online algorithm finds a solution of size at least $(2-o(1)) \log_2 n$ with high probability. In other words, this efficient algorithm is half-optimal and serves as a natural benchmark.

\begin{theorem}[Algorithms, informal]
    There exists an efficient online algorithm that finds a solution to $\MCS(n)$ that 
    has size at least $(2 - o(1)) \log_2 n$ with high probability as $n \to \infty$.
\end{theorem}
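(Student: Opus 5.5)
We analyze a greedy online procedure that builds a common induced subgraph by adding matched pairs one at a time. Maintain a partial solution: vertices $u_1,\dots,u_k$ in $G_1$ and $v_1,\dots,v_k$ in $G_2$, with $u_i\mapsto v_i$ an isomorphism of the induced subgraphs. Scan candidate pairs $(u,v)$ in a fixed order, where $u$ and $v$ are unused vertices. Accept $(u,v)$ if, for every $i\le k$, $u\sim u_i$ in $G_1$ exactly when $v\sim v_i$ in $G_2$. This only reveals edges between the candidates and the current solution, so it is online.

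To keep the randomness fresh, split the vertex sets. Partition $V(G_1)$ into blocks $A_1,A_2,\dots$ and $V(G_2)$ into blocks $B_1,B_2,\dots$. In round $k+1$, look for an accepted pair in $A_{k+1}\times B_{k+1}$. Every edge examined in round $k+1$ joins a vertex of $A_{k+1}$ (or $B_{k+1}$) to earlier solution vertices, and no such edge was seen before. So, conditionally on the history, the $k$-bit adjacency pattern of each $u\in A_{k+1}$ to $\{u_1,\dots,u_k\}$ is uniform on $\{0,1\}^k$, independently over $u$; the same holds for $B_{k+1}$.

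A pair is accepted exactly when the two patterns coincide. For round $k$, the probability that no $u\in A$ shares a pattern with any $v\in B$ is at most about $\exp(-c|A||B|2^{-k})$ when $|A|,|B|\ll 2^{k/2}$. More simply, bucket the vertices by pattern. Each of the $2^k$ patterns is hit by $A$ with probability about $\min(1,|A|2^{-k})$, so with $|A|=|B|=m$ the expected number of coinciding patterns is about $m^2 2^{-k}$. A second-moment or Poissonization argument then gives a coincidence with probability $1-\exp(-\Omega(m^2 2^{-k}))$.

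For the budget, take $K=(2-\eps)\log_2 n$ rounds with block size $m=n/K$. Then $m^2 2^{-k}\ge n^{2}2^{-K}/K^2=n^{\eps}/K^2\to\infty$. A union bound over the $K=O(\log n)$ rounds shows that all rounds succeed with high probability, so the algorithm outputs a solution of size $(2-\eps)\log_2 n$. Letting $\eps\to 0$ slowly gives $(2-o(1))\log_2 n$. The algorithm is polynomial time, since each round scans at most $m^2$ pairs, each checked in $O(k)$ time.

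The main obstacle is the conditional-independence bookkeeping. We must check that the patterns of the fresh block are exactly uniform and independent given everything revealed so far. We also need the lower bound on the probability that two independent uniform samples of size $m$ from $\{0,1\}^k$ intersect. That probability is $1-\E[\prod_{x}(1-2^{-k})^{\cdots}]$, which can be bounded directly: condition on the set $S$ of patterns realized by $A$. With high probability $|S|\ge m/2$ when $m\le 2^{k}$, by a birthday-type bound. Given $S$, the probability that all of $B$ misses $S$ is $(1-|S|2^{-k})^m\le\exp(-m^2 2^{-k-1})$. If $m>2^k$, the estimate is trivial.

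If a genuinely online formulation in vertex order is required, the same analysis applies. Reveal vertices of $G_1$ and $G_2$ in order, and accept a new pair as soon as a fresh vertex on each side has matching patterns. The block structure is exactly an implementation of this.
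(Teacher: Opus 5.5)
Your proposal is correct and is essentially the paper's argument. Your blocks are the paper's windows of $\widetilde n=\lfloor n/(2\log_2 n)\rfloor$ steps, the fresh vertices' patterns relative to the current size-$k$ solution are i.i.d.\ uniform on $\{0,1\}^k$, and the no-coincidence probability is bounded by conditioning on the set of patterns realized on one side. The one structural difference is that the paper runs the unrestricted greedy, which also tries pairs across windows, and uses windows only in the analysis via the events $\{T_k\le k\widetilde n,\ T_{k+1}>(k+1)\widetilde n\}$; you build the blocks into the algorithm, which makes the independence bookkeeping immediate at the cost of a less natural algorithm.

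Two small points in your collision estimate need tightening. First, near $m\approx 2^k$ a first-moment birthday count cannot give $|S|\ge m/2$ with failure probability $o(1/\log n)$, so you need concentration of the number of distinct patterns, whose mean is at least $(1-e^{-1})m$. Second, for $m>2^k$ the bound $\exp(-\Omega(m^2 2^{-k}))$ is false for small $k$, but a union bound over the pattern set of $A$ gives failure probability at most $2^{2^k}4^{-m}\le 2^{-m}$. The paper avoids both issues by splitting at $2^k=n^{1-\eps/3}$, using a coupon-collector bound below and a union bound over sets of $n^{1-\eps/2}$ patterns above.
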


Our main result shows that no online algorithm can do better than half-optimal. 
The class of online algorithms we consider is broad; roughly, it is defined as follows. 
An online algorithm for $\MCS$ processes the two graphs $G_{1}$ and $G_{2}$ one vertex pair $(u_{t}, v_{t}) \in V(G_{1}) \times V(G_{2})$ at a time. When a vertex pair is processed, all connections of $u_{t}$ and $v_{t}$ to previously processed vertices are revealed (in $G_{1}$ and $G_{2}$, respectively). The algorithm then has to make a decision whether or not to add a pair of vertices to the partial solution it has built so far. The main constraints are that both of the added vertices must have been already processed and \emph{at least} one of them must be $u_{t}$ or~$v_{t}$. 

Notably, an online algorithm does \emph{not} have to add both $u_{t}$ and $v_{t}$ at the same time, and as a result, it is possible for an individual vertex to be added to the solution well after it was originally processed. 
This is a fundamental difference compared to online algorithms for problems involving a single graph, where usually an immediate decision must be made about a vertex as soon as it is processed.\footnote{We note that in the problem of finding large cliques in the ``one-dimensional'' single graph case considered earlier in~\cite{GKW25}, postponing the decision about including/excluding $u_t$ trivializes the problem. Indeed, one can delay all decisions until all but $(2-o(1))\log_2 n$ vertices are revealed, find a nearly largest clique among the revealed vertices, and then include the vertices of the discovered clique in the last $(2-o(1))\log_2 n$ rounds.}
This difference stems from the ``two-dimensional'' nature of the $\MCS$ problem, with the input being two graphs. 
We defer formal definitions to Section~\ref{sec:impossibility_results} below.

\begin{theorem}[Impossibility, informal]\label{thm:impossibility-informal}
    Fix any $\eps > 0$. There is no online algorithm for $\MCS(n)$ that finds a solution of size at least $(2+\eps) \log_{2} n$ with probability bounded away from $0$ as $n \to \infty$. 
\end{theorem}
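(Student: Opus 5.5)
We will follow the interpolation-plus-multi-OGP strategy of \cite{GKW25}, adapted to the two-graph setting.

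\textbf{Interpolated instances.} Fix $\eps>0$ and suppose an online algorithm $\Alg$ outputs a solution of size at least $k=(2+\eps)\log_2 n$ with probability at least $p>0$ along a subsequence. Build a correlated family of input pairs $(G_1^{(i)},G_2^{(i)})$, $i=0,1,\dots,T$, as follows. Fix the processing order, and for each $i$ resample independently the edges revealed at time steps after a threshold $\tau_i$, where the thresholds decrease in $i$. Then instance $i$ agrees with instance $0$ on the first $\tau_i$ processed vertex pairs, and is fresh afterwards. Since $\Alg$ is online, its partial solution on instance $i$ up to time $\tau_i$ coincides with its partial solution on instance $0$. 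The solutions therefore form a nested family of overlapping pairs of vertex sets $(H_1^{(i)},H_2^{(i)})$.

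\textbf{Choosing a configuration.} Using the success probability $p$ and a Ramsey/pigeonhole argument over $T$ large (depending on $p,\eps$), we will show that with positive probability there exist $m$ successful instances whose solutions realize a prescribed \emph{forbidden} overlap pattern. The pattern is a ``flower'': $m$ solutions of size $k$ share a common core of roughly $\beta k$ vertex pairs. Outside the core, each petal consists of vertices whose edges to later-processed vertices were independently resampled. The overlap fraction $\beta$ is steered by moving the thresholds $\tau_i$ and using that the size of the partial solution grows by at most one per step. This intermediate-value argument must be done separately for the two coordinates. The extra subtlety here is that an online algorithm may add a vertex long after processing it, so the time a vertex enters the solution is not its processing time. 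We will handle this by indexing overlaps by the step at which the \emph{pair} is added: at that step at least one endpoint is new, so its edges to everything later were resampled.

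\textbf{First-moment count (expected main obstacle).} We bound the expected number of such flower configurations and show it is $o(1)$. This will need the paper's subsequent OGP statement, which we assume is proved later. Heuristically, a core of size $\beta k$ costs $n^{2\beta k}$ choices of vertex pairs, with probability $2^{-\binom{\beta k}{2}}$ for the core to be isomorphic. Each petal costs $n^{2(1-\beta)k}$ choices times $2^{-[\binom{(1-\beta)k}{2}+\beta(1-\beta)k^2]}$. The last exponent is the independence across petals: within a petal, edges between petal vertices and to core vertices are fresh. In $\log_2$ scale, with $k=c\log_2 n$, the exponent is
\[
k^2\Bigl[\tfrac{2\beta}{c}-\tfrac{\beta^2}{2}+m\Bigl(\tfrac{2(1-\beta)}{c}-\tfrac{(1-\beta)^2}{2}-\beta(1-\beta)\Bigr)\Bigr].
\]
For $c=2+\eps$ and $m$ large, the per-petal bracket is $(1-\beta)\bigl(\tfrac{2}{c}-\tfrac{1+\beta}{2}\bigr)$, which is negative unless $\beta<\tfrac{4}{c}-1<1$. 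So we pick $\beta$ slightly above $\tfrac{4}{c}-1$ and then $m$ so large that the negative petal term dominates the core term. Two points are the main technical obstacle. First, the overlap pattern forced by the interpolation is not exactly a uniform flower, since instances share nested rather than identical cores, and the isomorphism between $H_1$ and $H_2$ adds a bijection factor of at most $k!=n^{o(k)}$. Second, resampling affects only edges to later vertices, so we must count only edges that are genuinely independent. Both should be manageable by restricting to a subconfiguration and paying $n^{o(k)}$ combinatorial factors.

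\textbf{Conclusion.} Markov's inequality then shows the forbidden configuration appears with probability $o(1)$. This contradicts the positive probability from the interpolation step, which yields \cref{thm:impossibility-informal}.
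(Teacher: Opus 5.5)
\textbf{Overall.} Your first-moment exponent is right: with $\gamma=\beta c$ and $\alpha=c$, your per-petal bracket is the paper's $(\alpha-\gamma)\bigl(2-\tfrac{\alpha+\gamma}{2}\bigr)$, which is negative exactly when $\gamma>2-\varepsilon$. The gaps are elsewhere.

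\textbf{Main gap: joint success with controlled overlap.} The step that carries the theorem is producing $m$ \emph{simultaneously} successful correlated instances whose overlap is controlled, and this is missing. A Ramsey/intermediate-value extraction along $T$ instances needs essentially all of them to succeed at once. A union bound gives that only when the failure probability is below $1/T$, not when the success probability is merely bounded away from $0$. Also, deterministic thresholds cannot steer the overlap, because the partial-solution size at a fixed time is random.

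The paper instead uses a single stopping time $\tau$: the first step at which the partial solution on $Y$ has size $\lceil(2-\varepsilon/2)\log_2 n\rceil$. This size is hit exactly, since it grows by at most one per step. It then takes $m$ copies that agree with $Y$ on $\binom{P_j^{(\tau)}}{2}$ and are fresh elsewhere. Conditionally on the history $\Gamma^{(\tau)}$, these copies are i.i.d. Setting $q:=\Pr(\text{success}\mid\Gamma^{(\tau)})$, Jensen gives $\Pr(\text{all } m \text{ succeed})=\mathbb{E}[q^m]\ge(\mathbb{E}q)^m=p^m$. The common core is then the shared partial solution by construction, so you need no Ramsey step, no steering, and no nested cores. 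Alternatively, with $T\ge 2m/p$ such copies, a reverse-Markov bound gives $m$ successes with probability at least $p/2$.

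\textbf{Second gap: delayed additions.} Your worry here is well founded. The paper's Property~2 of forbidden structures says $V(H_{i,j})\cap P_j^{(t)}$ is exactly the shared core. That is not automatic for an algorithm that, after $\tau$, adds an old vertex together with a new one, as Greedy does. However, your fix does not close the gap.

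\emph{Why the fix fails.} ``At least one endpoint is new'' makes the constraints of a post-$\tau$ pair fresh only for the bijection that matches vertices as they were added. $\mathsf{ONLINE}(n)$ only requires $G_1[S_1^{(t)}]\cong G_2[S_2^{(t)}]$, so that matching need not be an isomorphism. The first moment therefore has to union-bound over all bijections $\pi$. Suppose $\pi$ sends an old petal vertex of $G_1$ to an old petal vertex of $G_2$. Then its constraints with core vertices that $\pi$ also maps to old vertices are fixed by the shared history, not fresh.

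\emph{How much this costs.} The set of old-to-old matched vertices can have size about $(\alpha+\gamma)L/2$ rather than $\gamma L$. If those constraints are treated as free, as in a bound conditional on $\Gamma^{(\tau)}$, the per-petal exponent becomes $(\alpha-\gamma)\bigl(2-\tfrac{3\alpha+\gamma}{8}\bigr)$. This is positive for $\alpha$ near $2+\varepsilon$, so the problem is not an $n^{o(k)}$ correction.

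\emph{What is needed.} You must also charge the old--old constraints. One way is an unconditional first moment, using that the revealed bits, indexed by processing time, are i.i.d.\ uniform. You would then also need to handle old vertices reused across different petals.
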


Overall, these results combined provide evidence that the $\MCS(n)$ problem exhibits a computation-to-optimization gap. 
We next turn to describing our results more formally and in more detail.

\subsection{A half-optimal greedy online algorithm}
A naive brute-force approach to finding a size-$k$ solution to $\MCS(n)$ would require checking all pairs of size-$k$ subgraphs of $G_1$ and $G_2$. For $k = \Theta(\log n)$, such an exhaustive approach would require superpolynomial runtime, since
$\binom{n}{k}^2 \geq (n/k)^{2k} \geq n^{\Omega(\log n)}$. 

To get a polynomial-time algorithm, a useful initial observation is that independent sets of equal size are isomorphic, so any algorithm that finds independent sets can also solve the $\MCS$ problem (and the same applies to cliques). 
The largest independent set in $G \sim \Gnhalf$ has size $(2-o(1)) \log_2 n$ with high probability 
and a simple greedy online algorithm finds an independent set of size $(1+o(1)) \log_2 n$~\cite{grimmett1975colouring,Kar76}. 
So this immediately gives an efficient quarter-optimal solution to $\MCS(n)$: use a greedy algorithm to find independent sets of size $(1+o(1)) \log_2 n$ in both $G_{1}$ and $G_{2}$. 
However, there is no known efficient algorithm to find an independent set of size $(1+\eps) \log_2 n$ in $G \sim \Gnhalf$ --- indeed, it is conjectured that such an algorithm does not exist --- so it is not immediately clear how to go beyond~this.

The key idea is to tailor the greedy algorithm for finding an independent set to the $\MCS$ problem. 
Recall first the algorithm for finding an independent set in a single graph $G$~\cite{grimmett1975colouring,Kar76}, 
which processes the vertices $u_{1}, \ldots, u_{n}$ of $G$ in an arbitrary order. It initializes the solution set $S$ to be empty and proceeds greedily, adding $u_{t}$ to $S$ if and only if $u_{t}$ is not connected to any of the vertices in~$S$. 
After processing all vertices, it outputs the induced subgraph $G[S]$, which is an independent set. 

Turning to the $\MCS$ problem, the input is now two graphs $G_{1}$ and $G_{2}$. 
Let $u_{1}, \ldots, u_{n}$ and 
$v_{1}, \ldots, v_{n}$ be arbitrary vertex orderings of $G_{1}$ and $G_{2}$, respectively, 
and initialize the pair of induced subgraphs $H_{1}$ and $H_{2}$ to be empty. 
The greedy algorithm processes vertex pairs $(u_{t}, v_{t})$ one at a time and does the following at time $t$. 
First, all connections of $u_{t}$ and $v_{t}$ to previously processed vertices are revealed (in $G_{1}$ and $G_{2}$, respectively). 
Then, for all $j \leq t$, do the following: 
\begin{itemize}
    \item if $u_t$ and $v_j$ can be added to the subgraphs $H_1$ and $H_2$, respectively, while preserving $H_1 \cong H_2$, then do so; 
    \item otherwise, perform the same check for $u_{j}$ and $v_{t}$.
\end{itemize}
Note that once a vertex pair is added to $H_{1}$ and $H_{2}$ at time $t$, 
the algorithm moves on to time~$t+1$. 
Once all vertices are processed, the algorithm returns $(H_{1}, H_{2})$ and a bijection $\pi : V(H_{1}) \to V(H_{2})$ 
that verifies $H_{1} \cong H_{2}$.\footnote{Note that outputting a graph isomorphism $\pi$ that verifies $H_{1} \cong H_{2}$ is not strictly necessary to solve the $\MCS$ problem.}
See Algorithm~\ref{alg:greedy} in Section~\ref{sec:greedy-proof} for a detailed description of the algorithm.

\begin{theorem}[Algorithms, formal]\label{thm:algs-formal}
    Let $(G_1, G_2) \sim \Gnhalf^{\otimes 2}$, define $\greedy$  as in Algorithm~\ref{alg:greedy}, 
    and let $(H_1, H_2, \pi)$ denote the output of $\greedy(G_1, G_2)$. 
    Then $(H_1, H_2)$ is a solution to $\MCS(n)$ and $\pi$ is a graph isomorphism between $H_{1}$ and $H_{2}$. 
    Furthermore, for any $\eps > 0$ and for all $n$ large enough, the size of $(H_1, H_2)$ is at least $(2 - \eps)\log_2 n$ with probability at least $1 - \exp(-n^{\eps / 4})$.
\end{theorem}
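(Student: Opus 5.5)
The argument has two parts: correctness, which is deterministic, and a probabilistic lower bound on the output size. For correctness I will argue by induction on the number of added pairs. The algorithm keeps a bijection $\pi$ from $V(H_1)$ to $V(H_2)$. A pair $(u,v)$ is added only if, for every $w \in V(H_1)$, we have $u w \in E(G_1)$ if and only if $v\pi(w) \in E(G_2)$. Extending $\pi$ by $u \mapsto v$ therefore keeps it an isomorphism $H_1 \to H_2$. All adjacencies needed for this test are between processed vertices, so they are already revealed.

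For the size bound, fix $\eps>0$ and set $k=\lceil(2-\eps)\log_2 n\rceil$. I split time into $k$ consecutive phases, each of length $L=\lfloor n/k\rfloor$. It suffices to show that in every phase, with overwhelming probability, some pair is added, provided the current solution has size $s<k$. Suppose the current solution is $(H_1,H_2,\pi)$ with $|H_1|=s\le k$, and consider step $t$ of the phase. The greedy check considers the pairs $(u_t,v_j)$ for $j\le t$. Each candidate $v_j\notin V(H_2)$ succeeds if its adjacency pattern to $V(H_2)$ matches, under $\pi$, the pattern of $u_t$ to $V(H_1)$.

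The key independence point is the following. The edges from $u_t$ to $V(H_1)$ are revealed only at time $t$. So, conditional on everything revealed before time $t$ (including which $v_j$ exist and their patterns), the pattern of $u_t$ is uniform on $\{0,1\}^s$. Also, the $v_j$ with $j<t$ not in $H_2$ have fixed patterns. The event that $u_t$ matches none of them is then at most the event that $u_t$ avoids one specified pattern, which is not strong enough. So I instead use the ``fresh'' structure in the other direction: take $t$ late, so that about $n/2$ vertices $v_j$ with $j\le t$ have been processed. Their adjacencies to $V(H_2)$ were not all determined at the time $H_2$ was built.

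The main obstacle is exactly this dependence: the patterns of old $v_j$'s toward $H_2$ may have been partially exposed and conditioned on by earlier failed checks. To handle it I would use a cleaner reduction. Restrict to steps $t\in(n/2,n]$, and within phase $i$ consider only the new vertices $u_t$ together with the set $B$ of $v_j$, $j\le n/2$, that are unused. The failure of previous checks involving $u_{t'}$ only conditions on $u_{t'}$'s edges, not on the $v_j$'s edges; this is because each failed check with the candidate $v_j$ says the patterns differ. Conditioning on the patterns of the $v_j$'s, the fresh $u_t$'s patterns are independent and uniform. The patterns of the $v_j$ in $B$ toward $V(H_2)$ are a function of $G_2$ alone and are $\{0,1\}^s$-valued. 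A first/second-moment (or Chernoff) bound shows that, with probability $1-\exp(-n^{\Omega(\eps)})$, for every $s\le k$ and every $s$-set that arises, the set $B$ realizes at least $n2^{-s}/4\ge n^{\eps/2}$ distinct patterns. Here I need a union bound over sets, which I would handle by revealing $G_2$ edges only as needed.

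Given this, each fresh $u_t$ matches some $v\in B$ with probability at least $n^{-(1-\eps/2)}$ roughly. Over $L\approx n/(2k)$ fresh steps in the phase, the probability that none succeeds is at most $(1-n^{-1+\eps/2})^{L}\le\exp(-n^{\eps/2}/\mathrm{polylog})$. A union bound over the $k$ phases then gives failure probability at most $\exp(-n^{\eps/4})$ for large $n$. The remaining point is whether $u_t$ may instead be matched via the $(u_j,v_t)$ branch, but either addition advances the size, so this only helps.
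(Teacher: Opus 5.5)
Your architecture can be made to work as an alternative to the paper's: a reservoir $B\subseteq\{v_1,\dots,v_{n/2}\}$, fresh $u_t$'s at times $t>n/2$, conditioning on all of $G_2$, and a per-phase bound $(1-D/2^s)^{L}$, where $D$ is the number of distinct patterns $B$ realizes toward $V(H_2)$. However, your estimate on $D$ is wrong as stated, and it is the heart of the proof; the numbers do not chain.

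The quantity $n2^{-s}$ is the expected number of reservoir vertices carrying one \emph{fixed} pattern. It is not a count of distinct patterns. For small $s$, $n2^{-s}/4$ exceeds $2^s$, the total number of patterns. For $s$ near $(2-\eps)\log_2 n$ it is only about $n^{-1+\eps}$, below $n^{\eps/2}$. So the claim that $B$ realizes at least $n2^{-s}/4\ge n^{\eps/2}$ distinct patterns fails in both regimes.

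Even granting $D\ge n^{\eps/2}$, the per-step success probability would be only $D/2^s\approx n^{-2+3\eps/2}$. Over $\approx n/\log n$ steps this gives nothing when $\eps<2/3$. Your claimed per-step probability $n^{-(1-\eps/2)}$ actually needs $D\ge\min\{2^s,n^{1-\eps/2}\}$, which splits into two regimes:
\begin{itemize}
\item When $2^s\le n^{1-\eps/3}$, you need full coverage of $\{0,1\}^s$. A union bound over missing patterns gives failure at most $2^s e^{-|B|/2^s}$.
\item Otherwise, you need nearly all of the $\approx n/2$ reservoir vertices to have distinct patterns. With $M=\lfloor n^{1-\eps/2}\rfloor$, the failure probability is at most $\binom{2^s}{M}(M/2^s)^{|B|}\le n^{2M-\eps|B|/6}$.
\end{itemize}
This two-regime occupancy bound is exactly the estimate the paper proves.

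Uniformity over sets is also not handled correctly. By time $t>n/2$ the algorithm has already seen every edge between $B$ and $V(H_2)$. Those edges influenced which set $V(H_2)$ it built, so there is nothing left to ``reveal as needed''. Relatedly, a failed check is a joint event in both patterns, so your claim that failed checks only condition on $u_{t'}$'s edges is false.

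The right fix is a plain union bound over all $\binom{n}{s}\le n^{2\log_2 n}$ candidate sets $S$. For fixed $S$, the patterns of $\{v_1,\dots,v_{n/2}\}\setminus S$ toward $S$ are i.i.d.\ uniform. The per-set failure probability above is $\exp(-n^{\Omega(\eps)})$, which swamps the $n^{O(\log n)}$ sets. Once you condition on all of $G_2$, the failed-check issue is moot: given $G_2$ and the history up to time $t-1$, the pattern of $u_t$ toward $V(H_1)$ is uniform.

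With these repairs, and with your $k$ phases of length $\approx n/(2k)$ placed inside $(n/2,n]$, your route proves the theorem. The paper avoids both issues by working blockwise with fresh vertices on \emph{both} sides. Greedy checks every pair in $U_k\times V_k$, and conditioned on the state at the start of the block both pattern families are fresh. So the paper needs no union bound over sets.
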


The \greedy algorithm has runtime $O(n^{2} \log n)$ (i.e., quasilinear in the input size) with high probability in the average-case setting of Theorem~\ref{thm:algs-formal}, and $O(n^{3})$ in the worst case (see Claim~\ref{clm:runtime}). 
Also, the proof of Theorem~\ref{thm:algs-formal} shows that the conclusion still holds for $\eps = \eps_{n} \geq \frac{C \log \log n}{\log n}$ for large enough constant $C$, 
and in particular it shows that the output of \greedy has size at least $2 \log_2 n - 9 \log \log n$ with high probability (see Corollary~\ref{cor:greedy}).

\subsection{The limits of online algorithms}\label{sec:impossibility_results}

Our impossibility result rules out a certain broad class of online algorithms, which includes the \greedy algorithm described in Algorithm~\ref{alg:greedy}. Instead of receiving the input all at once, these algorithms receive bits of the input in serial and \emph{make irrevocable decisions about the output at each step}. It is this irrevocability property that constrains online algorithms; otherwise, the serial processing of the input is moot. Unlike the closely related independent set problem on a single graph, the input to $\MCS$ consists of \emph{two} graphs, so there are several distinct notions of how information about the pair $(G_1, G_2)$ should be processed by an online algorithm. What ought to be the atomic object? 

We propose the following natural definition. At a high level, at each step, an online algorithm $\Alg$ will process a \emph{pair} of vertices $(v_1, v_2) \in V(G_1) \times V(G_2)$ and decide whether or not to add a (potentially distinct) pair of vertices to the partial solution, under the constraint that both of the added vertices must have been already processed, and \emph{at least} one of them must be $v_1$ or $v_2$. As a result, it is possible for an individual vertex to be added to the solution after it was processed. As highlighted previously, this is a fundamental difference compared to the definition of online algorithms for independent set~\cite{GKW25}, and is due to the two-dimensional nature of the $\MCS$ problem. Our formal definition is below.

\begin{definition}[Online algorithms for $\MCS$]\label{def:online}
    The class $\online(n)$ consists of algorithms $\Alg$ which take as input $(G_1, G_2)$ and a random string $\omega \in \Omega$, where for each $j \in \{1, 2\}$, $G_j = (V_j, E_j)$ is a graph with $V_j = [n]$. Each $\Alg \in \online(n)$ proceeds in rounds, and at the end of round $t \in [n]$, $\Alg$ maintains the following~state:
    \begin{itemize}
        \item The sets of vertices $P^{(t)}_1 \subseteq V_1$ and $P^{(t)}_2 \subseteq V_2$ inspected thus far.
        \item Vertex sets $S^{(t)}_1 \subseteq P^{(t)}_1$ and $S^{(t)}_2 \subseteq P^{(t)}_2$ that induce a pair of isomorphic subgraphs. We write
        \begin{align}
            \mathcal{A}^{(t)}(G_{1}, G_{2}, \omega) := \left(G_1\left[S^{(t)}_1\right], G_2\left[S^{(t)}_2\right]\right)
        \end{align}
        for the (partial) solution at the end of round $t \in [n]$. 
    \end{itemize}
    $\Alg$ initializes $P^{(0)}_1 = P^{(0)}_2 = S^{(0)}_1 = S^{(0)}_2 = \emptyset$. During each round $t \in [n]$, $\Alg$ proceeds as follows, where for all $\{u, v\} \in \binom{[n]}{2}$, $e_j(\{u, v\}) \in \bits$ denotes the edge status of the vertex pair $\{u, v\}$ in graph~$G_j$.
    \begin{enumerate}
        \item \emph{(Processing a new vertex pair.)} 
        Based on $\omega$ and all of the information revealed through the end of round $t-1$ 
        (namely, $P_{1}^{(t-1)}$, $P_{2}^{(t-1)}$, $G_{1}\left[ P_{1}^{(t-1)} \right]$, $G_{2}\left[ P_{2}^{(t-1)} \right]$, $S_{1}^{(t-1)}$, and $S_{2}^{(t-1)}$), 
        for $j \in \{1, 2\}$, 
        the algorithm $\Alg$ selects $v^{(t)}_j \in V_j \setminus P^{(t-1)}_j$ and defines $P^{(t)}_j := P^{(t-1)}_j \cup \{v^{(t)}_j\}$. It then reveals $e_j(\{u, v\})$ for all not-yet-seen vertex pairs $\{u, v\} \in \binom{P^{(t)}_{j}}{2}$, for $j \in \{1,2\}$.
        \item \emph{(Updating the solution.)} 
        Based on $\omega$ and all of the information revealed thus far, $\Alg$ chooses how to update the vertex sets as
        \begin{align}
            \left(S^{(t)}_1, S^{(t)}_2\right) \in \begin{cases}
                \left(S^{(t-1)}_1, S^{(t-1)}_2\right) \\
                \left(S^{(t-1)}_1 \cup \{v_1\}, S^{(t-1)}_2 \cup \{v_2\}\right),
            \end{cases}
        \end{align}
        under the constraint that $v_j \in P^{(t)}_j$ for $j \in \{1,2\}$, and 
        either $v_{1} = v_{1}^{(t)}$ 
        or $v_{2} = v_{2}^{(t)}$ 
        (or both). 
        Moreover, the algorithm $\Alg$ must maintain that $G_1\left[S^{(t)}_1\right] \cong G_2\left[S^{(t)}_2\right]$.
    \end{enumerate}
    We write the output of $\Alg$ as
    \begin{align}
        \Alg(G_{1}, G_{2}, \omega) := \Alg^{(n)}(G_{1}, G_{2}, \omega) = \left(G_1\left[S^{(n)}_1\right], G_2\left[S^{(n)}_2\right]\right).
    \end{align}
    We write $|\Alg(G_{1}, G_{2}, \omega)| := \left|S^{(n)}_1\right| = \left|S^{(n)}_2\right|$ to denote the size of the output. 
\end{definition}

Aside from the two-dimensional nature of $\MCS$, Definition~\ref{def:online} closely tracks \cite[Definition~2.1]{GKW25}. One difference is that the latter definition includes algorithms that can choose a set of \emph{exceptional edges}, which allows for better-informed decisions at each step. We believe that Definition~\ref{def:online} could be similarly strengthened to include exceptional information without affecting our results (assuming an appropriate bound on the amount of exceptional information), but we choose not to for simplicity.

Our main result is the following, which rules out online algorithms beyond $2 \log_2 n$.

\begin{theorem}[Impossibility, formal]\label{thm:impossibility-formal}
    Let $(G_1, G_2) \sim \Gnhalf^{\otimes 2}$, let $\Alg \in \online(n)$, and fix any $\eps > 0$. For all $\omega \in \Omega$ and all sufficiently large $n$, we have that 
    \begin{align}
        \Pr\left(|\Alg(G_1, G_2, \omega)| \geq (2 + \eps)\log_2 n \right) \leq n^{-\Omega(\log n)}.
    \end{align} 
\end{theorem}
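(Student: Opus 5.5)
We follow the interpolation-plus-multi-OGP approach of \cite{GKW25}, adapted to two graphs. Set $k = (2+\eps)\log_2 n$ and fix $\omega$; it is deterministic from here on. Suppose for contradiction that $p := \Pr(|\Alg(G_1,G_2,\omega)| \ge k) > n^{-C\log n}$ for a large constant $C$. We build a correlated sequence of instances $(G_1^{(i)}, G_2^{(i)})$, $i = 0,1,\dots,m$, with $m = m(\eps)$ a constant. Instance $i$ agrees with instance $i-1$ on all edges among the first $\tau_i$ processed vertex pairs and is independently resampled afterwards. The resampling is chosen adaptively, following the algorithm's own processing order. Every instance is marginally $\Gnhalf^{\otimes 2}$.

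Because $\Alg$ is online, its state after round $\tau$ depends only on the revealed edges. So on instances sharing their first $\tau$ rounds, the partial solutions $S^{(\tau)}_j$ coincide, and the outputs share a common prefix. The step genuinely new relative to \cite{GKW25} is that a vertex can be added after it is processed. We handle this by charging each added pair to the round in which it is added: that round always contains at least one newly processed vertex, $v_1^{(t)}$ or $v_2^{(t)}$. After the divergence time, each instance's solution therefore gains pairs in which at least one coordinate is a vertex whose edges to everything else are fresh in that instance. We will pick the $\tau_i$ so that each successive solution adds roughly $\beta\log_2 n$ new pairs beyond the common part, with $\beta$ tuned to $\eps$. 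By positive correlation (Jensen/Hölder over the shared randomness, as in \cite{GKW25}), all $m+1$ instances simultaneously succeed with probability at least about $p^{m+1}$. Note that $p^{m+1} > n^{-C'\log n}$.

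The contradiction comes from a first-moment bound on the forbidden configuration: a ``flower'' of $m+1$ solutions of size $k$ on correlated instances, with nested common parts and fresh petals. For a petal of size $s$ attached to a core of size $c$, the number of choices is about $n^{2s}$, since each added pair chooses a vertex in each graph. For every added pair at least one endpoint is fresh, so each of its $\approx c+s$ adjacencies with the rest of the solution must match, an event of probability $2^{-1}$ each. This gives roughly $n^{2s}2^{-s(c+s/2)}$ per petal, with the $s^2/2$ term coming from pairs inside the petal. Summing the exponents along the chain, this is negative once $k > (2+\eps)\log_2 n$ and $m$ is large: the total solution mass is at least $(2+\eps)\log_2 n$ while each increment only ``pays'' about $2\log_2 n$ per vertex. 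The expected count of flowers is thus at most $n^{-\Omega(\log n)\cdot m}$, which beats $p^{m+1}$ for $C$ large.

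The main obstacle is the first-moment computation given the asymmetry. In a petal pair $(u,v)$, only one of $u,v$ may be fresh. The other may be an old processed vertex, whose edges to old vertices are shared across instances, and it may even appear in several petals. The plan is to count adjacency constraints only through the fresh endpoint: a pair with $u$ fresh forces its $G_1$-row to copy the $G_2$-row of $v$, independently of the past. We then need enough such forced equalities, about $\ge(1+\eps/4)\log_2 n$ per added vertex. This requires the petals to be large relative to the core. That is why the interpolation must be tuned so that the common parts grow slowly, which is the delicate choice of $\tau_i$ and $\beta$.
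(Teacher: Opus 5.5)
\textbf{Genuine gap: the divergence time is never fixed, and your stated guidance on it points the wrong way.}

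Your own per-petal estimate is $n^{2s}2^{-s(c+s/2)}=2^{s(2\log_2 n-c-s/2)}$. For a solution of size $\alpha\log_2 n=c+s$, the exponent equals $s\bigl(s/2-(\alpha-2)\log_2 n\bigr)$. At the smallest admissible size $\alpha=2+\eps$, this is negative only when $s<2\eps\log_2 n$, i.e.\ only when the shared core already exceeds $(2-\eps)\log_2 n$.

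Your closing claim, that the petals must be large relative to the core so the common parts should grow slowly, is therefore backwards. With a small core each petal has positive exponent, the expected number of flowers grows with $m$, and the first moment gives nothing. You also never turn ``$\tau_i$ chosen so that each solution adds about $\beta\log_2 n$ pairs'' into an actual stopping rule.

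The paper's proof supplies exactly this missing choice:
\begin{itemize}
\item It uses a single stopping time $\tau$, the first round at which the partial solution on $Y$ reaches $\lceil(2-\eps/2)\log_2 n\rceil$.
\item All $m$ copies share $\Gamma^{(\tau)}$ and are resampled independently elsewhere. This is a star, not a chain, so conditional independence and Jensen give $\Pr(\findOneLarge)^m\le\Pr(\findManyLarge)$ immediately, with none of the nested-stopping-time bookkeeping your chain would need.
\item With $\gamma=2-\eps/2$ and $L=\log_2 n$, each petal contributes $n^{-f(\alpha_i,\gamma)L}$, where $f(\alpha_i,\gamma)=(\alpha_i-\gamma)\bigl(\tfrac{\alpha_i+\gamma}{2}-2\bigr)\ge\tfrac{3\eps^2}{8}$. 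The core contributes $n^{2\gamma L}\le n^{4L}$, so $m=\lceil 40/(3\eps^2)\rceil$ suffices.
\end{itemize}

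Your contradiction step is also reversed. Assuming $p>n^{-C\log n}$, you need $C$ small (of order $\eps^2$), not large. The paper simply reads off $p\le n^{-\log_2 n/m+o(\log n)}$.

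\textbf{On old endpoints: your worry is legitimate, and the paper's argument passes over it.} Definition~\ref{def:online} lets a processed but unused vertex enter after $\tau$, paired with a new one; \greedy does exactly this. So $V(H_{i,j})\cap P_j^{(\tau)}$ can strictly contain $S_j^{(\tau)}$. Property~\ref{prop:two} of Definition~\ref{def:forbidden-structures}, however, demands equality, and Lemma~\ref{lem:S-above} derives it from Remark~\ref{rem:interpret-interpolation} and $\{\tau=t\}$ alone.

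Your fresh-endpoint accounting is the right repair. Each post-$\tau$ pair costs $n^2$. For any two solution pairs $(x_1,x_2),(y_1,y_2)$ not both in the core, the indicator $\mathbf{1}\{e_1(\{x_1,y_1\})=e_2(\{x_2,y_2\})\}$ involves a resampled bit. This recovers the paper's $\binom{\alpha_iL}{2}-\binom{\lceil\gamma L\rceil}{2}$ independent constraints and the same exponent.

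This repair, however, presumes that the witnessing isomorphism pairs each added $(v_1,v_2)$ with each other. Definition~\ref{def:online} only requires $G_1[S_1^{(t)}]\cong G_2[S_2^{(t)}]$. A bijection mapping old vertices added after $\tau$ on one side to old vertices added after $\tau$ on the other loses their constraints with the core. You must either build the matching into the model or control such bijections separately.
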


\subsection{Related work}

\paragraph{Existential results for $\MCS$.}
The $\MCS(n)$ problem was first studied by Chatterjee and Diaconis~\cite{CD23}, who showed a precise two-point concentration result: 
with high probability the largest common induced subgraph has size 
$\left\lfloor x_{n} - \eps \right\rfloor$ 
or 
$\left\lfloor x_{n} + \eps \right\rfloor$, 
where 
$x_{n}$ satisfies
$x_{n} = (4-o(1))\log_2 n$. 
Subsequently, Surya, Warnke, and Zhu~\cite{SWZ25} proved a more general result
for arbitrary edge densities $p_1, p_2 \in (0,1)^2$ in the two graphs, 
with three qualitatively different regimes of $p_1, p_2 \in (0,1)^2$. 
Diamantidis, Konstantopoulos, and Yuan~\cite{DKY25} used a different proof technique to show the same result, albeit for a slightly restricted set of edge densities.

\paragraph{Independent set.}
$\MCS$ is closely related to the much-studied problem of finding large independent sets. In the past decade a line of work has focused on giving evidence of the computational hardness of finding large independent sets in sparse random graphs $G \sim \mathbb{G}(n,d/n)$. Gamarnik and Sudan~\cite{GS17} showed that no local algorithms can find independent sets with size above some multiplicative factor $\beta$ of optimal, for $\beta = 1/2 + 1/(2\sqrt{2}) \approx 0.85$. Rahman and Vir\'ag~\cite{RV17} strengthened this result to show that local algorithms cannot produce independent sets larger than half-optimal size, which is tight. Soon after, Wein~\cite{Wei21} (building off of \cite{GJW24}) strengthened this result by extending the class of implicated algorithms to stable ones, which includes low-degree polynomials. Observing that a natural half-optimal approach \cite{grimmett1975colouring,Kar76} is online (but not stable), Gamarnik, K{\i}z{\i}lda\u{g}, and Warnke~\cite{GKW25} proved that a natural class of online algorithms cannot beat half-optimality. 

\paragraph{Overlap gap properties.}
All of the computational hardness results above rely on proving that the independent set problem exhibits a type of \emph{overlap gap property} (OGP). At a high level, an OGP is a statement about the geometry of the solution space; it stipulates the non-existence of tuples of large solutions with a particular overlap pattern. Proving more sophisticated OGPs (for instance, allowing solutions to come from an ``ensemble'' of problem instances rather than a single one, or considering $m$-tuples of solutions rather than pairs) have allowed for tighter hardness results. For instance, Wein~\cite{Wei21} constructs an ``interpolation path'' of correlated random graphs $G_1, \dots, G_T$. The author argues by contradiction that, given a stable algorithm $\Alg$ that outputs large solutions, one could run $\Alg$ on each $G_i$ and construct an $m$-tuple of solutions that violates an ensemble $m$-OGP.

Recently, Gamarnik, K{\i}z{\i}lda\u{g}, and Warnke~\cite{GKW25} introduced a novel interpolation argument, building an interpolation path of correlated inputs that are defined in reference to any algorithm $\Alg$ in a class of online algorithms that they define. This differs from the above OGP hardness results in that the non-existent structure is not of a tuple of solutions to the problem, but rather of a tuple of solutions \emph{created by a particular type of algorithm}. As a result, the hardness proof is coextensive with the proof of the OGP, rather than coming afterward in, say, a proof by contradiction. 
This approach has several advantages, including that it establishes \emph{strong hardness} (i.e., ruling out algorithms succeeding even with only $o(1)$ probability), something that was elusive in OGP-style arguments prior to the work of Huang and Sellke~\cite{HS25}. 
Our impossibility proof utilizes this novel interpolation argument by~\cite{GKW25}; our main new ingredient is to prove an OGP for $\MCS(n)$.

\paragraph{Computational hardness of random problems.}
Our work fits into a broader landscape of using the overlap gap property to prove hardness results for random optimization problems. See \cite{DGH25, GKPX23, BGG26} for further impossibility results for online algorithms. Other hardness results include \cite{GKPX22, huang2025tight, HS25}. 
We note that such computation-to-optimization gaps have fundamental connections with statistical-computation gaps in planted problems which are widespread in high-dimensional learning. 
We do not attempt to survey the literature here --- see \cite{Gam21, Gam25} for recent surveys and more.

\subsection{Open problems}
We leave open several natural questions for future work. 

    \paragraph{General dense case.} In this work we assumed that both graphs have edge density $1/2$. For arbitrary edge densities $(p_1, p_2) \in (0,1)^2$, Surya, Warnke, and Zhu~\cite{SWZ25} 
    determined that the size of the largest common induced subgraph is roughly $C_{p_1, p_2} \log_{2} n$, where interestingly the constant $C_{p_1, p_2}$ takes on qualitatively different forms in three different regimes of $(p_{1}, p_{2}) \in (0,1)^{2}$. 
    What can we say about efficient algorithms in this more general setting? Is greedy still half-optimal, and does the impossibility proof extend naturally to this setting?

    \paragraph{Sparse case.} Can we find evidence of computational hardness for $\MCS$ given $G_1, G_2 \sim \mathbb{G}(n,d/n)$? Do these mirror analogous results for large independent sets in sparse graphs?
    
    \paragraph{Other classes of algorithms.} Can we rule out other classes of algorithms? For instance, it would be natural to consider online algorithms with exceptional edge-pairs, as well as stable algorithms.

    \paragraph{Many graphs.} What can we say about $\MCS$ for many graphs? A quick calculation suggests that a greedy algorithm remains half-optimal; it would be interesting to explore this setting further. %

    \paragraph{Average-case reduction from finding large independent sets.} 
    As discussed above, the results that we obtained for $\MCS(n)$ have similarities to those for finding large independent sets in $G \sim \Gnhalf$; in particular, a greedy algorithm is half-optimal in both settings, and conjecturally there is no efficient algorithm that does better than half-optimal. 
    Is there a deeper connection between these two problems? In particular, is there an average-case reduction from finding large independent sets to finding large common induced subgraphs (or vice versa)? 

    One attempt could be to consider independent sets in $G \sim \mathbb{G}(n^{2}, 1/2)$, noting that the largest independent set in $G$ has size roughly $2 \log_{2} (n^{2}) = 4 \log_2 n$, 
    which is the same as the largest common induced subgraph in $(G_{1}, G_{2}) \sim \Gnhalf^{\otimes 2}$. Moreover, the $n^{2}$ vertices of $G$ have a natural correspondence with ordered pairs of vertices $(u,v) \in V(G_{1}) \times V(G_{2})$. 
    It is not clear how to use these ideas to build a formal average-case reduction; doing so would be very interesting.

\subsection{Notation}

We write $G \cong H$ to denote that two graphs $G$ and $H$ are isomorphic. 
For a graph $G$ and a subset $S \subseteq V(G)$, we let $G[S]$ denote the subgraph of $G$ induced by $S$. We let $[n] := \{1, \ldots, n \}$. The size $|\sigma|$ of a solution $\sigma = (H_1, H_2)$ to the $\MCS$ problem is the number of vertices in each subgraph $|\sigma| = |V(H_1)| = |V(H_2)|$. 
We let $\mathbf{1}\{\mathcal{E}\}$ denote the indicator random variable of an event $\mathcal{E}$. 
We use standard asymptotic notation throughout. 

\subsection{Outline}

The rest of the paper is organized as follows. 
In Section~\ref{sec:proof-ideas} we describe the high level proof ideas of Theorems~\ref{thm:algs-formal} and~\ref{thm:impossibility-formal}, focusing on the more involved proof of Theorem~\ref{thm:impossibility-formal}. 
Section~\ref{sec:greedy-proof} contains the analysis of the greedy algorithm and proves Theorem~\ref{thm:algs-formal}. 
Finally, Section~\ref{sec:impossibility-proof} contains the proof of the impossibility result, Theorem~\ref{thm:impossibility-formal}. 

\section{Proof ideas}\label{sec:proof-ideas}

\subsection{Greedy algorithm}\label{sec:greedy-ideas}

In Section~\ref{sec:greedy-proof} we prove Theorem~\ref{thm:algs-formal}, that the algorithm $\greedy$ from Algorithm~\ref{alg:greedy} produces half-optimal solutions to $\MCS(n)$. Recall that $\greedy$ builds up the common induced subgraph iteratively: at step $t \in [n]$, $\greedy$ processes the vertex pair $(u_t, v_t) \in V(G_1) \times V(G_2)$. If there exists some already-processed vertex $v_j$ (i.e., with $j \leq t$) such that $(u_t, v_j)$ can be added to the partial solution, it does so --- and similarly for any $(u_j, v_t)$. 

Our goal is to give an upper bound on the probability that $\greedy$ outputs a common induced subgraph of size less than $(2 - \eps)\log_2 n$. We observe that if this happens, then there must be a window of $\widetilde{n}$ successive steps where $\greedy$ does not add to the partial solution, for
\begin{align*}
    \widetilde{n} := \left\lfloor \frac{n}{2\log_2 n} \right\rfloor.
\end{align*}
In other words, over the course of the window, $\greedy$ considers and rejects at least $\widetilde{n}^2$ vertex pairs~$(u, v) \in V(G_{1}) \times V(G_{2})$. Let $1 \leq k < (2-\eps) \log_2 n$ be the size of the partial solution at the start of the window --- we will eventually take a union bound over all possible values of $k$. If each pair $(u,v)$ were independent, the probability of rejecting all such pairs would be at most
\begin{align*}
    \left(1 - \frac{1}{2^k}\right)^{\widetilde{n}^2} \leq \exp \left( - \frac{\widetilde{n}^2}{2^k}\right) < \exp(-\widetilde{\Omega}(n^\eps)),
\end{align*}
which is small as desired. 
Of course, the vertex pairs are not all independent; we handle this via a careful conditioning argument.

\subsection{The limits of online algorithms}\label{sec:impossibility-ideas}

In Section~\ref{sec:impossibility-proof} we prove Theorem~\ref{thm:impossibility-formal}, that there is a vanishingly small probability that any $\Alg \in \online(n)$ can produce a solution to $\MCS$ that is better than half-optimal. We show in Lemma~\ref{lem:exists-deterministic} that it suffices to consider deterministic algorithms. At a high level, we relate the probability that $\Alg$ finds a large solution to the probability that $\Alg$ finds a large solution within every input from a carefully-constructed set of correlated inputs, which will ensure that the solutions have substantial mutual overlap. Finally, we will use a combinatorial argument and the first moment method to show that such a set of solutions exists with vanishingly small probability.

In more detail, we say that a large solution $\sigma$ to the $\MCS(n)$ problem is one of size $|\sigma| \geq (2 + \eps)\log_2 n$. We consider the event $\findOneLarge$, that an arbitrary deterministic algorithm $\Alg$ finds a large solution in an input $Y = (G_1, G_2) \sim \Gnhalf^{\otimes 2}$. We start by relating the probability of $\findOneLarge$ to the probability of the event $\findManyLarge$, that $\Alg$ finds a set $\{\sigma_1, \dots, \sigma_m\}$ of large solutions from a set of inputs called an \emph{interpolation path} (Figure~\ref{fig:interpolation-paths} and Definition~\ref{def:interpolation-paths}). An interpolation path $Y_1^{(t)}, \dots, Y_m^{(t)}$ is constructed so that on each input $Y_i^{(t)}$, $\Alg$ behaves identically for the first $t \in [n]$ steps of the process, and thereafter independently, conditioned on the information learned in the first $t$ steps. Moreover, the marginal distribution of each input is $\Gnhalf^{\otimes 2}$. 

In Lemma~\ref{lem:S-below}, we show that for any constant $m \in \N$, we have that 
\begin{align*}
    \Pr(\findOneLarge)^m \leq \Pr(\findManyLarge).
\end{align*}
The proof uses the conditional independence of each $\Alg\left(Y_i^{(t)}\right)$ after $t$ steps and an application of Jensen's inequality.

\begin{figure}[t!]%
\centering
\begin{tikzpicture}[scale=0.65]
    
    \drawpair{-1.3*\columnspacing}{0}{gray5}{gray5}{}{}{\Large $Y$};
    
    \draw[-{Stealth[length=3mm, width=2mm]}, thick] 
        (-1.3*\columnspacing + 2*\circleradius + \pairspacing/2 + 5pt + 3pt, 0.5cm-7.5pt) 
        -- (0 - 2*\circleradius - \pairspacing/2 - 5pt - 3pt, 2*\rowspacing - 0.5cm-7.5pt);
    \draw[-{Stealth[length=3mm, width=2mm]}, thick] 
        (-1.3*\columnspacing + 2*\circleradius + \pairspacing/2 + 5pt + 3pt, -7.5pt) 
        -- (0 - 2*\circleradius - \pairspacing/2 - 5pt - 3pt, -7.5pt);
    \draw[-{Stealth[length=3mm, width=2mm]}, thick] 
        (-1.3*\columnspacing + 2*\circleradius + \pairspacing/2 + 5pt + 3pt, -0.5cm-7.5pt) 
        -- (0 - 2*\circleradius - \pairspacing/2 - 5pt - 3pt, -2.5*\rowspacing + 0.5cm-7.5pt);
    
    \drawpair{0}{2*\rowspacing}{gray5}{gray5}{
        \filldraw[greenshape, draw=black, thin] \leftcircle;
    }{
        \filldraw[greenshape, draw=black, thin] \rightcircle;
    }{\large $Y_1^{(1)}$};
    \drawpair{\columnspacing}{2*\rowspacing}{gray2}{gray7}{
        \filldraw[greenshape, draw=black, thin] \leftcircle;
    }{
        \filldraw[greenshape, draw=black, thin] \rightcircle;
    }{\large $Y_2^{(1)}$};
    \drawpair{2.5*\columnspacing}{2*\rowspacing}{gray6}{gray2}{
        \filldraw[greenshape, draw=black, thin] \leftcircle;
    }{
        \filldraw[greenshape, draw=black, thin] \rightcircle;
    }{\large $Y_m^{(1)}$};
    
    \node at (0, 0.9*\rowspacing) {\Huge $\shortstack{.\\[1pt].\\[1pt].}$};

    \node at (1.75*\columnspacing, 2*\rowspacing-7.5pt) {\Huge $\cdot\hspace{3.5pt}\cdot\hspace{3.5pt}\cdot$};
    
    \drawpair{0}{0}{gray5}{gray5}{%
        \filldraw[greenshape, draw=black, thin] \leftgreenshape;
    }{%
        \filldraw[greenshape, draw=black, thin] \rightgreenshape;
    }{\large $Y_1^{(t)}$};
    \drawpair{\columnspacing}{0}{gray8}{gray3}{%
        \filldraw[greenshape, draw=black, thin] \leftgreenshape;
    }{%
        \filldraw[greenshape, draw=black, thin] \rightgreenshape;
    }{\large $Y_2^{(t)}$};
    
    \node at (1.75*\columnspacing, -7.5pt) {\Huge $\cdot\hspace{3.5pt}\cdot\hspace{3.5pt}\cdot$};
    
    \drawpair{2.5*\columnspacing}{0}{gray4}{gray9}{%
        \filldraw[greenshape, draw=black, thin] \leftgreenshape;
    }{%
        \filldraw[greenshape, draw=black, thin] \rightgreenshape;
    }{\large $Y_m^{(t)}$};
    
    \node at (0, -1.35*\rowspacing) {\Huge $\shortstack{.\\[3.5pt].\\[3.5pt].}$};
    
    \drawpair{0}{-2.5*\rowspacing}{greenshape}{greenshape}{}{}{\large $Y_1^{(n)}$};
    \drawpair{\columnspacing}{-2.5*\rowspacing}{greenshape}{greenshape}{}{}{\large $Y_2^{(n)}$};
    \drawpair{2.5*\columnspacing}{-2.5*\rowspacing}{greenshape}{greenshape}{}{}{\large $Y_m^{(n)}$};

    \node at (1.75*\columnspacing, -2.5*\rowspacing-7.5pt) {\Huge $\cdot\hspace{3.5pt}\cdot\hspace{3.5pt}\cdot$};

\end{tikzpicture}
\caption{A family of interpolation paths based on $\Alg \in \online(n)$, $Y \sim \Gnhalf^{\otimes 2}$, and $m \in \N$. For each $t \in [n]$, row $t$ is an interpolation path $Y_1^{(t)}, \dots, Y_m^{(t)}$, with $Y_1^{(t)} := Y$. For graph $j \in \{1,2\}$ in each input, the green region corresponds to the vertices $P_j^{(t)}$ that have been processed by $\Alg(Y)$ after $t$ steps. The edge-statuses of all pairs of nodes in the green regions are fixed for all graphs in the family. Aside from the first column (which are all~$Y$), the edge status of all other pairs is sampled independently from $\mathrm{Bern}(1/2)$. Therefore each $Y_i^{(t)} \sim \Gnhalf^{\otimes 2}$.}
\label{fig:interpolation-paths}
\end{figure}

For any $m \in \N$ and $t \in [n]$, the random variable $X_{m,t}$ counts the number of sets of large solutions from the interpolation path $Y_1^{(t)}, \dots, Y_m^{(t)}$ that have mutual overlap of size at least $\lceil \gamma L\rceil$ in each subgraph, for 
\[
    \gamma := 2 - \frac \eps 2 
    \qquad 
    \text{and}
    \qquad 
    L := \log_2 n. 
\]
Intuitively, these sets should look like a pair of flowers, where each flower has $m$ petals and a central face corresponding to the overlapping portion --- see Figure~\ref{fig:flowers} and Definition~\ref{def:forbidden-structures}.

\begin{figure}
    \centering
        \begin{tikzpicture}[scale=0.85]
        
        \drawflower{-3}{0}[oval label=n_1,
            disc label=C_1, 
            disc fill=greenshape, 
            disc border=black, 
            petal fill=gray!20, 
            petal border=black, 
            petal 1 label=2, 
            petal 2 label=1, 
            petal 3 label=m, 
            petal 4 label=\cdots, 
            petal 5 label=i, 
            petal 6 label=\cdots,
            petal 1 fill=niceyellow,
            petal 2 fill=niceyellow,
            petal 3 fill=niceyellow,
            petal 4 fill=niceyellow,
            petal 5 fill=niceyellow,
            petal 6 fill=niceyellow,
        ]
        
        \drawflower{3}{0}[oval label=n_2,
            disc label=C_2, 
            disc fill=greenshape, 
            disc border=black, 
            petal fill=gray!20, 
            petal border=black, 
            petal 1 label=2, 
            petal 2 label=1, 
            petal 3 label=m, 
            petal 4 label=\cdots, 
            petal 5 label=i,
            petal 6 label=\cdots,
            petal 1 fill=niceyellow,
            petal 2 fill=niceyellow,
            petal 3 fill=niceyellow,
            petal 4 fill=niceyellow,
            petal 5 fill=niceyellow,
            petal 6 fill=niceyellow,
        ]
        \end{tikzpicture}
    \caption{$X_{m,t}$ counts the number of sets of large solutions $\{\sigma_i\}_{i \in [m]}$ to the $\MCS(n)$ problem with a particular overlap pattern; we visualize one such set here. Each $\sigma_i := \left(H_{i,1}, H_{i,2}\right)$ is a pair of isomorphic subgraphs of $Y_i^{(t)} = \left(G_{i,1}^{(t)}, G_{i,2}^{(t)}\right)$, 
    with 
    $|\sigma_{i}| \geq (2+\eps) \log_{2} n$. 
    To see the overlap pattern, we overlay each $V(H_{i,j})$ onto $n_j := [n]$, where the vertices $P_j^{(t)}$ processed by $\Alg$ through step $t$ are colored green. Each $V(H_{i,j})$ is the union of the yellow petal $i$ in $n_j$ and the green disc $C_j$, of size $|C_j| = \lceil(2 - \eps/2)\log_2 n\rceil$. Crucially, note that $C_j = V(H_{i,j}) \cap P_j^{(t)}$ for all~$i \in [m]$.}
    \label{fig:flowers}
\end{figure}

In Lemma~\ref{lem:S-above}, we show that
\begin{align*}
    \Pr(\findManyLarge) \leq \sum_{t=1}^n \Pr(X_{m,t} \geq 1).
\end{align*}
In fact, the event $\findManyLarge$ states that $\Alg$ is run on each input $Y_1^{(t)}, \dots, Y_m^{(t)}$, for a particular $t = \tau \in [n]$. The proof of Lemma~\ref{lem:S-above} relies on setting $\tau$ to be the first time step at which $\Alg(Y)$ finds a solution of size $\lceil \gamma L \rceil$ (or $\tau = n$ if such a solution is never found).

With this definition in hand, the proof of Lemma~\ref{lem:S-above} is relatively straightforward: $\findManyLarge$ is the event that $\Alg$ finds a large solution in each correlated input $Y_i^{(\tau)}$ and we have defined $Y_1^{(\tau)}, \dots, Y_m^{(\tau)}$ such that the first $\lceil \gamma L \rceil$ nodes in each solution are identical. Therefore the solutions have substantial mutual overlap, so $X_{m,\tau} \geq 1$. 

We call these $m$-tuples of overlapping solutions \emph{forbidden structures}, since we show in Lemma~\ref{lem:multi-OGP} that they exist with vanishing probability as $n \to \infty$:
\begin{align*}
    \sum_{t=1}^n \Pr(X_{m,t} \geq 1) \leq n^{-\log_2 n + o(\log n)}.
\end{align*}
In essence, Lemma~\ref{lem:multi-OGP} says that the $\MCS$ problem exhibits an overlap gap property --- it stipulates the non-existence of certain tuples of large solutions with a particular overlap pattern. Similar to other proofs of the overlap gap property, we prove Lemma~\ref{lem:multi-OGP} by the first moment method and analyzing $\E[X_{m,t}]$. To enable our analysis, we say that each $\sigma_i$ has size $\alpha_i L$, for some
\begin{align*}
    \alpha_i \geq 2 + \eps. %
\end{align*}
Throughout the proof we use the standard upper bound $\binom{n}{k} \leq n^k$. We can then upper bound the number of such forbidden structures by
\begin{align}
    \left[\binom{n}{\lceil \gamma L\rceil} \prod_{i=1}^m \binom{n}{(\alpha_i - \gamma)L}\right]^2 \leq n^{2L\left(\gamma + \sum_{i=1}^m (\alpha_i - \gamma)\right)}, \label{OGP-counting-informal}
\end{align}
since the first term counts the number of choices for the set of common vertices of size $\lceil \gamma L\rceil$ in each subgraph, and the second term counts the number of choices for the leftover $(\alpha_i - \gamma)L$ vertices in each subgraph of each solution $\sigma_i$.

To analyze the probability that a given forbidden structure is in fact a set of solutions to $\MCS(n)$, we consider each $\sigma_i = (H_{i,1}, H_{i,2})$, fix a bijection $\pi_i: V(H_{i,1}) \to V(H_{i,2})$, and calculate the probability that $\pi_i$ is a graph isomorphism, conditioned on the subgraph already seen by $\Alg$ through step $t$. Ultimately, we  take a union bound over all such permutations $\pi_i$, but since the subgraphs are of size $O(\log n)$, this contributes a multiplicative factor of only $(\log n)^{\log n} \approx n^{\log \log n} < n^{o(L)}$, which is negligible. 
Since each $Y_i^{(t)} \sim \Gnhalf^{\otimes 2}$, $\pi_i$ preserves adjacencies independently with probability $1/2$ for every pair of vertex pairs $(u, u') \in \binom{V(H_{i,1})}{2}$ and $(\pi_i(u), \pi_i(u')) \in \binom{V(H_{i,2})}{2}$, so long as the connectivity of at least one of the pairs has not been determined by our conditioning. There are at least $\binom{\alpha_i L}{2} - \binom{\lceil \gamma L\rceil}{2}$ such pairs (this uses that $t \leq n - \Omega(\log n)$; the case when $t$ is very close to $n$ is simple to handle separately), so since each $Y_i^{(t)}$ is conditionally independent, we can upper bound the probability that all $\sigma_i$ are solutions by
\begin{align}
    \prod_{i=1}^m 2^{-\left(\binom{\alpha_i L}{2} - \binom{\gamma L}{2}\right)} \approx 2^{-\frac{L^2}{2}\sum_{i=1}^m\left(\alpha_i^2 - \gamma^2\right)} = n^{-\frac{L}{2}\sum_{i=1}^m\left(\alpha_i^2 - \gamma^2\right)}, \label{OGP-probability-informal}
\end{align}
where the $\approx$ symbol means we have dropped lower-order terms in the exponent. Leveraging the linearity of expectation together with estimates~(\ref{OGP-counting-informal}) and~(\ref{OGP-probability-informal}), we get that
\begin{align*}
    \E[X_{m,t}] \leq n^{-\frac L 2\left(-4\gamma + \sum_{i=1}^m\left[(\alpha_i^2 - 4\alpha_i) - (\gamma^2 - 4\gamma)\right]\right)}. %
\end{align*}
Moreover, since $\alpha_i^2 - 4\alpha_i \geq \eps^2 - 4$ for all $\alpha_i \geq 2 + \eps$ and also $\gamma^2 - 4\gamma = \eps^2/4 -4$, we have that 
\begin{align*}
    -4\gamma + \sum_{i=1}^m\left[(\alpha_i^2 - 4\alpha_i) - (\gamma^2 - 4\gamma)\right] \geq -8 + m\left(\frac{3\eps^2}{4}\right).
\end{align*}
So by choosing $m := \left\lceil 40/(3\eps^2) \right\rceil$ we obtain $\E[X_{m,t}] \leq n^{-L}$, which proves Lemma~\ref{lem:multi-OGP}. Chaining together Lemmas~\ref{lem:S-below},~\ref{lem:S-above}, and~\ref{lem:multi-OGP}, we get that
\begin{align}
    \Pr(\findOneLarge)^m \leq \Pr(\findManyLarge) \leq \sum_{t=1}^{n} \Pr(X_{m,t} \geq 1) \leq n^{-\log_2 n + o(\log n)}
\end{align}
which immediately proves Theorem~\ref{thm:impossibility-formal}.

\section{Analysis of the greedy algorithm}\label{sec:greedy-proof}

In this section we analyze the \greedy algorithm described in Algorithm~\ref{alg:greedy} and prove Theorem~\ref{thm:algs-formal}. 

\begin{algorithm}[ht!]
\textbf{Input:} Two graphs $G_{1}$ and $G_{2}$. \\ 
\textbf{Output:} Two induced subgraphs $H_{1} \subseteq G_{1}$ and $H_{2} \subseteq G_{2}$ such that $H_{1} \cong H_{2}$, and a graph isomorphism $\pi : V(H_{1}) \to V(H_{2})$.
\begin{algorithmic}[1]
\Procedure{Greedy}{$G_1, G_2$} 
    \State Let $u_1, \dots, u_n$ be an arbitrary ordering of $V(G_{1})$. 
    \State Let $v_1, \dots, v_n$ be an arbitrary ordering of $V(G_{2})$.
    \State Initialize $S_1:= \emptyset$ and $ S_{2} := \emptyset$. 
    \State Initialize $\pi : S_{1} \to S_{2}$ to be the empty mapping.
    \For{$i \in [n]$}
        \State $\done := \F$;
        \State $j := 1$;
        \While{$\done == \F$ and $j \leq i$}
            \If{$v_{j} \notin S_{2}$ and $\textsc{IdenticalConnections}(u_{i}, v_{j}, S_{1}, S_{2}, \pi, G_{1}, G_{2})$}
                \State $S_{1} \gets S_{1} \cup \{ u_{i} \}$
                \State $S_{2} \gets S_{2} \cup \{ v_{j} \}$
                \State $\pi(u_{i}) := v_{j}$;
                \State $\done \gets \T$;
            \ElsIf{$u_{j} \notin S_{1}$ and $\textsc{IdenticalConnections}(u_{j}, v_{i}, S_{1}, S_{2}, \pi, G_{1}, G_{2})$}
                \State $S_{1} \gets S_{1} \cup \{ u_{j} \}$
                \State $S_{2} \gets S_{2} \cup \{ v_{i} \}$;
                \State $\pi(u_{j}) := v_{i}$;
                \State $\done \gets \T$;
            \Else 
                \State $j \gets j+1$;
            \EndIf
        \EndWhile 
    \EndFor
    \State $H_{1} := G_{1}[S_{1}]$;
    \State $H_{2} := G_{2}[S_{2}]$;
    \State \Return $(H_{1}, H_{2}, \pi)$
\EndProcedure
\State
\Procedure{IdenticalConnections}{$u, v, S_1, S_2, \pi, G_1, G_2$}
    \State \Return $\forall u' \in S_{1}: \quad (u, u') \in E(G_1) \iff (v, \pi(u')) \in E(G_2)$
\EndProcedure
\end{algorithmic}
\caption{Greedy algorithm for finding common induced subgraphs}\label{alg:greedy}
\end{algorithm}

\begin{proof}[Proof of Theorem~\ref{thm:algs-formal}]
Throughout the execution of \greedy the induced subgraphs 
$G_{1}[S_{1}]$ and~$G_{2}[S_{2}]$ 
are isomorphic by construction, 
with $\pi : S_{1} \to S_{2}$ being a graph isomorphism. 
Consequently, this remains true for the output of \greedy too. 

We now turn to the lower bound on the size of the output. 
We first introduce some helpful notation. 
Let $S_{1}(t)$ denote the set $S_{1}$ after \greedy processes the vertex pair $(u_{t}, v_{t})$, 
and define $S_{2}(t)$ analogously. 
For $k \in \N$, let $T_{k} := \min \left\{ t : |S_{1}(t)| = k \right\}$ denote the first time when $S_{1}$ has size~$k$. 
For convenience we set $T_{0} := 0$, note that $T_{1} = 1$, and by convention we set $T_{k} = \infty$ if $|S_{1}(n)| < k$. 
To abbreviate expressions that appear throughout the proof, 
set 
\[
\ell := \left\lceil (2-\eps) \log_2 n \right\rceil
\qquad 
\text{and} 
\qquad 
\widetilde{n} := \left\lfloor \frac{n}{2 \log_2 n} \right\rfloor,
\]
and note that $\ell \widetilde{n} \leq n$ (assuming $\eps \geq 1/\log_{2}(n)$).

With the notation above, the size of the output of \greedy is $|S_{1}(n)|$ 
and our goal is to show that $|S_{1}(n)| \geq (2-\eps) \log_2 n$ 
with high probability. 
Observe that $|S_{1}(n)| \geq (2-\eps) \log_2 n$ if and only if $T_{\ell} \leq n$. 
Note also that if $T_{\ell} > n$, 
then there exists $k \in \{1, \ldots, \ell - 1\}$ such that 
$T_{k} \leq k \widetilde{n}$ and $T_{k+1} > (k+1) \widetilde{n}$. Thus by a union bound we have that 
\begin{equation}\label{eq:error_blocks}
\Pr \left( |S_{1}(n)| < (2-\eps) \log_{2}n \right) 
= \Pr( T_{\ell} > n )
\leq \sum_{k=1}^{\ell-1} \Pr \left( T_{k} \leq k \widetilde{n}, T_{k+1} > (k+1) \widetilde{n} \right).
\end{equation}

In the remainder of the proof we fix $k \in \{1, \ldots, \ell - 1\}$
and bound the probability on the right hand side of~\eqref{eq:error_blocks}. 
To do so, we condition on the sets $S_{1}(k \widetilde{n})$ and $S_{2}(k \widetilde{n})$, obtaining that 
\[
\Pr \left( T_{k} \leq k \widetilde{n}, T_{k+1} > (k+1) \widetilde{n} \right) 
= \E \left[ \Pr \left( T_{k} \leq k \widetilde{n}, T_{k+1} > (k+1) \widetilde{n} \, \middle| \, S_{1}(k \widetilde{n}), S_{2}(k \widetilde{n}) \right)  \right].
\]
Note that 
$\Pr \left( T_{k} \leq k \widetilde{n}, T_{k+1} > (k+1) \widetilde{n} \, \middle| \, S_{1}(k \widetilde{n}), S_{2}(k \widetilde{n}) \right) = 0$ 
unless 
$|S_{1}(k \widetilde{n})| = |S_{2}(k \widetilde{n})| = k$. 
Thus in what follows we may and will assume that 
$|S_{1}(k \widetilde{n})| = |S_{2}(k \widetilde{n})| = k$. 

Let 
$U_{k} := \left\{ u_{k \widetilde{n}+1}, \ldots, u_{(k+1)\widetilde{n}} \right\}$ 
and 
$V_{k} := \left\{ v_{k \widetilde{n}+1}, \ldots, v_{(k+1)\widetilde{n}} \right\}$, 
and note that $|U_{k}| = |V_{k}| = \widetilde{n}$. 
For $u \in U_{k}$, 
let $X_{1,k}(u) \in \{0,1\}^{k}$ be the random variable encoding the connectivity of $u$ to the vertices in $S_{1}(k \widetilde{n})$; that is, the coordinates of $X_{1,k}(u)$ are the indicator random variables $\mathbf{1}{\{ (u,u') \in E(G_{1})\}}$ 
for $u' \in S_{1}(k \widetilde{n})$. 
For $v \in V_{k}$, 
define $X_{2,k}(v) \in \{0,1\}^{k}$ analogously, 
and in particular order the coordinates in a way such that 
if the $i$-th coordinate of $X_{1,k}(u)$ corresponds to the indicator random variable $\mathbf{1} {\{ (u,u') \in E(G_{1})\}}$ for some $u' \in S_{1}(k \widetilde{n})$, 
then the $i$-th coordinate of $X_{2,k}(v)$ corresponds to the indicator random variable $\mathbf{1} {\{ (v,\pi(u')) \in E(G_{2})\}}$. 
With these definitions, observe that 
if $T_{k} \leq k \widetilde{n}$ and $T_{k+1} > (k+1) \widetilde{n}$, 
then for every $u \in U_{k}$ and for every $v \in V_{k}$ 
we have that $X_{1,k}(u) \neq X_{2,k}(v)$ 
(since if it were the case that $X_{1,k}(u) = X_{2,k}(v)$ 
for some $u \in U_{k}$ and $v \in V_{k}$, 
then \greedy would add $u$ to $S_{1}$ and $v$ to $S_{2}$, 
which contradicts $T_{k+1} > (k+1) \widetilde{n}$). 
Since $\{ X_{1,k}(u) \}_{u \in U_{k}}$ and $\{ X_{2,k} (v) \}_{v \in V_{k}}$ are all independent of $S_{1}(k \widetilde{n})$ and $S_{2}(k \widetilde{n})$, we thus have that 
\begin{equation}\label{eq:block_gap}
\Pr \left( T_{k} \leq k \widetilde{n}, T_{k+1} > (k+1) \widetilde{n} \, \middle| \, S_{1}(k \widetilde{n}), S_{2}(k \widetilde{n}) \right) 
\leq 
\Pr \left( \forall u \in U_{k} \ \forall v \in V_{k} : X_{1,k}(u) \neq X_{2,k}(v) \right). 
\end{equation}

To bound the probability in the display above, we can first condition on $\{ X_{1,k}(u) \}_{u \in U_{k}}$ and then use that $\{ X_{2,k} (v) \}_{v \in V_{k}}$ are i.i.d.\ uniform to obtain that 
\begin{align}
&\Pr \left( \forall u \in U_{k} \ \forall v \in V_{k} : X_{1,k}(u) \neq X_{2,k}(v) \right) \notag \\
&\qquad \qquad \qquad \qquad = \E \left[ \Pr \left( \forall u \in U_{k} \ \forall v \in V_{k} : X_{1,k}(u) \neq X_{2,k}(v) \, \middle| \, \{ X_{1,k}(u) \}_{u \in U_{k}} \right) \right] \notag \\
&\qquad \qquad \qquad \qquad = \E \left[ \Pr \left( \forall u \in U_{k} : X_{1,k}(u) \neq X_{2,k}(v) \, \middle| \, \{ X_{1,k}(u) \}_{u \in U_{k}} \right)^{\widetilde{n}} \right] \notag \\
&\qquad \qquad \qquad \qquad = \E \left[ \left( 1 - \frac{\left| \left\{ X_{1,k}(u) : u \in U_{k} \right\} \right|}{2^{k}} \right)^{\widetilde{n}} \right], \label{eq:key_prob_calc}
\end{align}
where in the second-to-last line $v$ is an arbitrary vertex in $V_{k}$. 
In the rest of the proof we bound the quantity in~\eqref{eq:key_prob_calc} in two different ways, depending on whether $k$ is small or large. 

First, assume that $k \leq (1-\eps/3) \log_{2} n$. 
Note that the quantity in the expectation in~\eqref{eq:key_prob_calc} is at most $1$, and it is equal to $0$ precisely when 
$\left| \left\{ X_{1,k}(u) : u \in U_{k} \right\} \right| = 2^{k}$. 
Consequently, we can bound~\eqref{eq:key_prob_calc} by the probability that 
$\left| \left\{ X_{1,k}(u) : u \in U_{k} \right\} \right| < 2^{k}$, 
obtaining that 
\begin{align*}
\E \left[ \left( 1 - \frac{\left| \left\{ X_{1,k}(u) : u \in U_{k} \right\} \right|}{2^{k}} \right)^{\widetilde{n}} \right]
&\leq 
\Pr \left( \left| \left\{ X_{1,k}(u) : u \in U_{k} \right\} \right| < 2^{k} \right) \\
&= \Pr \left( \exists x \in \{0,1\}^{k} : \forall u \in U_{k} : X_{1,k}(u) \neq x \right) \\
&\leq 2^{k} \left( 1 - \frac{1}{2^{k}} \right)^{\widetilde{n}} 
\leq 2^{k} \exp \left( - \frac{\widetilde{n}}{2^{k}} \right) 
\leq n \cdot \exp \left( 1 - \frac{n^{\eps/3}}{2 \log_2 n} \right),
\end{align*}
where in the last line we used a union bound, 
that $\{ X_{1,k}(u) \}_{u \in U_{k}}$ are i.i.d.\ uniform, 
the inequality $1-x \leq e^{-x}$, 
and also that $2^{k} \leq n^{1-\eps/3}$.

Next, assume that 
$(1-\eps/3) \log_{2} n \leq k \leq (2-\eps) \log_{2} n$. 
In this case we argue that with high probability 
$\left| \left\{ X_{1,k}(u) : u \in U_{k} \right\} \right| \geq n^{1-\eps/2}$, 
which allows us to bound the expectation in~\eqref{eq:key_prob_calc}. 
First, we bound the probability that 
$\left| \left\{ X_{1,k}(u) : u \in U_{k} \right\} \right| < n^{1-\eps/2}$
using a simple union bound as follows: 
\begin{multline*}
\Pr \left( \left| \left\{ X_{1,k}(u) : u \in U_{k} \right\} \right| < n^{1-\eps/2} \right) 
\leq 
\Pr \left( \exists S \subseteq \{0,1\}^{k} : |S| = \left\lfloor n^{1-\eps/2} \right\rfloor, \forall u \in U_{k} : X_{1,k}(u) \in S \right) \\
\leq \binom{2^{k}}{\left\lfloor n^{1-\eps/2} \right\rfloor} \left( \frac{\left\lfloor n^{1-\eps/2} \right\rfloor}{2^{k}} \right)^{\widetilde{n}} 
\leq \left( 2^{k} \right)^{n^{1-\eps/2}} \left( n^{-\eps/6} \right)^{\widetilde{n}} 
\leq n^{2n^{1-\eps/2} - \eps \widetilde{n}/6},
\end{multline*}
where we used that 
$n^{1-\eps/3} \leq 2^{k} \leq n^{2}$. We thus have that 
\[
\E \left[ \left( 1 - \frac{\left| \left\{ X_{1,k}(u) : u \in U_{k} \right\} \right|}{2^{k}} \right)^{\widetilde{n}} \right] 
\leq 
\left( 1 - \frac{n^{1-\eps/2}}{2^{k}} \right)^{\widetilde{n}} + n^{2n^{1-\eps/2} - \eps \widetilde{n}/6},
\]
and using the bound $2^{k} \leq n^{2-\eps}$ and the inequality $1-x \leq e^{-x}$ we have that 
\[
\left( 1 - \frac{n^{1-\eps/2}}{2^{k}} \right)^{\widetilde{n}} 
\leq 
\left( 1 - \frac{1}{n^{1-\eps/2}} \right)^{\widetilde{n}} 
\leq \exp \left( - \frac{\widetilde{n}}{n^{1-\eps/2}} \right) 
\leq \exp \left( 1 - \frac{n^{\eps/2}}{2 \log_2 n} \right).
\]

Finally, putting together~\eqref{eq:error_blocks},~\eqref{eq:block_gap},~\eqref{eq:key_prob_calc}, and the bounds on~\eqref{eq:key_prob_calc} discussed above, we have that 
\begin{multline}\label{eq:greedy_final_error_prob}
\Pr \left( |S_{1}(n)| < (2-\eps) \log_{2}(n) \right) \\
\leq (2 \log_2 n) \cdot 
\max \left\{ n \cdot \exp \left( 1 - \frac{n^{\eps/3}}{2 \log_2 n} \right), 
\exp \left( 1 - \frac{n^{\eps/2}}{2 \log_2 n} \right) + n^{2n^{1-\eps/2} - \eps \widetilde{n}/6} \right\}
\end{multline}
which is at most $\exp\left( - n^{\eps/4} \right)$ 
whenever $\eps \geq \frac{13 \log \log n}{\log n}$ 
and $n$ is large enough. 
\end{proof}

In fact, the error probability obtained in~\eqref{eq:greedy_final_error_prob} vanishes as $n \to \infty$ 
whenever 
$\eps = \eps_{n} \geq \frac{C \log \log n}{\log n}$ for a constant $C > 6$. 
Noting that $6 / \log(2)\approx 8.656 < 9$, we thus obtain the following result as a corollary (noting that the constant in front of the second-order $\log \log n$ term has not been optimized). 
\begin{corollary}\label{cor:greedy}
    Let $(G_1, G_2) \sim \Gnhalf^{\otimes 2}$, define $\greedy$  as in Algorithm~\ref{alg:greedy}, 
    let $(H_1, H_2, \pi)$ denote the output of $\greedy(G_1, G_2)$. 
    Then $(H_1, H_2)$ is a solution to $\MCS(n)$ and $\pi$ is a graph isomorphism between $H_{1}$ and $H_{2}$. 
    Furthermore, the size of $(H_1, H_2)$ is at least $2 \log_{2} n - 9 \log \log n$ with high probability.
\end{corollary}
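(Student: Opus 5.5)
The statement is Corollary~\ref{cor:greedy}, so I will read it off the proof of Theorem~\ref{thm:algs-formal}, running that argument with an $n$-dependent $\eps = \eps_n$. The first claim (that $(H_1,H_2)$ is a solution and $\pi$ an isomorphism) is identical to the first part of Theorem~\ref{thm:algs-formal}: it holds deterministically by construction, since \textsc{IdenticalConnections} is checked before every insertion. For the size bound, set
\[
\eps_n := \frac{9\log\log n}{\log_2 n} = \frac{9\log 2\,\log\log n}{\log n},
\]
so that $(2-\eps_n)\log_2 n = 2\log_2 n - 9\log\log n$. Note $9\log 2 \approx 6.24 > 6$, so $\eps_n = C\log\log n/\log n$ with $C>6$. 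Also $\eps_n \ge 1/\log_2 n$, so the requirement $\ell\widetilde n\le n$ still holds.

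The main step is to check that every step of the proof of Theorem~\ref{thm:algs-formal} used only $\eps>0$, and never that $\eps$ is fixed. The decomposition \eqref{eq:error_blocks}, the conditioning giving \eqref{eq:block_gap}, the identity \eqref{eq:key_prob_calc}, and the two case bounds ($k\le(1-\eps/3)\log_2 n$ and $k\ge(1-\eps/3)\log_2 n$) are all valid for any $\eps\in(0,1)$. Hence \eqref{eq:greedy_final_error_prob} holds with $\eps=\eps_n$, and it remains to show its right-hand side is $o(1)$.

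This asymptotic check is the only real obstacle. With $\eps_n = C\log\log n/\log n$, we have
\[
n^{\eps_n/3} = (\log n)^{C/3} \qquad\text{and}\qquad n^{\eps_n/2} = (\log n)^{C/2}.
\]
\begin{itemize}
\item First term of the max. The prefactors are $2\log_2 n \cdot n = e^{\log n + O(\log\log n)}$. Against this we need $\frac{(\log n)^{C/3}}{2\log_2 n} \gg \log n$, i.e.\ $(\log n)^{C/3-2}\to\infty$ after accounting for constants. This requires $C>6$, which holds.
\item Second term. It is even smaller, since $C/2 > C/3$.
\item Third term. Here $n^{2n^{1-\eps/2} - \eps\widetilde n/6}$ has exponent $\approx -\frac{C\log\log n}{6\log n}\cdot\frac{n}{2\log_2 n} + 2n/(\log n)^{C/2}$. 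Since $C/2>3>2$, the negative part dominates, and the term is superpolynomially small.
\end{itemize}
Hence the total failure probability tends to $0$, giving $|S_1(n)| \ge 2\log_2 n - 9\log\log n$ with high probability.

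One point needs care: the case split uses $\lceil(2-\eps)\log_2 n\rceil$ and the bound $2^k\le n^{2-\eps}$. The ceiling may push $\ell-1$ slightly above $(2-\eps)\log_2 n - 1$, but $k\le \ell-1 < (2-\eps)\log_2 n$, so both bounds survive.
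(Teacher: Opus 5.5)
Your proposal is correct and follows the paper's argument. The paper likewise observes that the error bound \eqref{eq:greedy_final_error_prob} holds for an $n$-dependent $\eps_n = C\log\log n/\log n$ and vanishes whenever $C>6$, and that the coefficient $9 > 6/\log 2$ gives exactly such a $C = 9\log 2$; your term-by-term asymptotic check, and your remark that $k \le \ell-1 < (2-\eps)\log_2 n$ keeps $2^k \le n^{2-\eps}$ valid, just spell out details the paper leaves implicit.
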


For completeness we include a brief analysis of the runtime of the \greedy algorithm. 

\begin{claim}[Runtime]\label{clm:runtime}
    Fix $n \in \N$ and let $G_{1}$ and $G_{2}$ be two graphs with $n$ vertices. 
    The \greedy algorithm (Algorithm~\ref{alg:greedy}) with input $(G_{1}, G_{2})$ runs in $O(n^{3})$ time. 
    In addition, if $(G_{1}, G_{2}) \sim \Gnhalf^{\otimes 2}$, 
    then with high probability $\greedy(G_{1}, G_{2})$ runs in $O(n^{2} \log n)$ time. 
\end{claim}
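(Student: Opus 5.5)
We bound the work of \greedy by summing over the outer iterations $i \in [n]$ the number of inner \textbf{while}-loop iterations times the cost of each. We store $G_1, G_2$ as adjacency matrices and $S_1, S_2$ as Boolean membership arrays, so membership tests and edge queries are $O(1)$. A call to $\textsc{IdenticalConnections}$ then takes $O(|S_1|)$ time by scanning $S_1$ and comparing $e_1(u,u')$ with $e_2(v,\pi(u'))$. Each inner iteration makes at most two such calls, and in outer iteration $i$ the index $j$ ranges over at most $i \le n$ values.

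\textbf{Worst case.} It costs $O(|S_1|)$ per inner iteration, and $|S_1| \le n$ trivially. The crude bound this gives is $O(n \cdot n \cdot n) = O(n^3)$ only if we use a sharper per-step bound. Indeed $n \cdot n$ inner iterations times $O(n)$ per check would give $n^3$ directly: $\sum_{i\le n} i \cdot O(n) = O(n^3)$. So the worst case is immediate.

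\textbf{Average case.} Here we need $|S_1|$ to be small throughout. The largest common induced subgraph of $G_1,G_2$ has size $O(\log n)$ with high probability. This follows from the Chatterjee--Diaconis result \cite{CD23}, or directly by a first-moment bound: the expected number of size-$k$ solutions with a bijection is at most $n^{2k} k!\, 2^{-\binom{k}{2}}$, which is $o(1)$ for $k = 5\log_2 n$. Since $(G_1[S_1],G_2[S_2])$ is always a common induced subgraph, $|S_1| = O(\log n)$ at every step on this event. Each check then costs $O(\log n)$, and the total is $\sum_i i\cdot O(\log n) = O(n^2\log n)$.

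The only mildly delicate point is the average-case bound on the solution size. We use it uniformly over the whole run, which is fine because it is a property of the input graphs rather than of the execution.
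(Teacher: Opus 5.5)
Your proposal is correct and follows the paper's argument. You bound the number of outer and inner iterations by $n$ each and charge $O(|S_1|)$ per \textsc{IdenticalConnections} call. You then use that $|S_1|$ never exceeds the size of the largest common induced subgraph, which is $O(\log n)$ with high probability. Your self-contained first-moment justification of that fact (with $k = 5\log_2 n$, plus the implicit observation that any larger common induced subgraph restricts to one of size exactly $k$) and your explicit adjacency-matrix data structures are fine additions. The sentence in your worst-case paragraph about needing a ``sharper per-step bound'' is muddled, however, and should simply be deleted, since $\sum_{i \le n} i \cdot O(n) = O(n^3)$ is all that is needed.
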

\begin{proof}
    The \textsf{for} loop goes over $i \in [n]$, which gives a factor of $n$. For any $i$, the body of the \textsf{while} loop is executed at most $i \leq n$ times. Finally, running the \textsc{IdenticalConnections} procedure takes $O(|S_{1}|)$ time, which gives a factor of $O(n)$, since $|S_{1}| \leq n$ in the worst case. Overall, these three factors combine to give a worst-case running time of $O(n^{3})$. 

    In the average-case setting with high probability the largest common subgraph has size $O(\log n)$. Therefore, with high probability we have $|S_{1}| = O(\log n)$ throughout the execution of the algorithm, so running the \textsc{IdenticalConnections} procedure takes $O(|S_{1}|) = O(\log n)$ time throughout. 
\end{proof}

\section{Proof of impossibility for online algorithms}\label{sec:impossibility-proof}
In this section we prove Theorem~\ref{thm:impossibility-formal}. 
To abbreviate notation, we write 
$Y := (G_{1}, G_{2}) \sim \Gnhalf^{\otimes 2}$
for the input pair of graphs. 
Note that Definition~\ref{def:online}---and hence also Theorem~\ref{thm:impossibility-formal}---allows randomized algorithms. However, a simple averaging-based argument shows that it suffices to consider deterministic algorithms. 
We state this in the following lemma.

\begin{lemma}\label{lem:exists-deterministic}
    For any randomized algorithm $\Alg$, 
    there exists a random seed $\omega^* \in \Omega$ such that the deterministic algorithm $\Alg^{*}(Y) := \Alg(Y, \omega^*)$ satisfies
    \begin{align}
        \Pr_{Y, \omega}(|\Alg(Y, \omega)| \geq (2 + \eps) \log_2 n) \leq \Pr_{Y}(|\Alg^{*}(Y)| \geq (2 + \eps)\log_2 n). \label{line:better-deterministic}
    \end{align}
\end{lemma}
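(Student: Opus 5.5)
The plan is the standard averaging argument: the success probability of the randomized algorithm is an average over seeds of the success probabilities of deterministic algorithms, so some seed does at least as well as the average.

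Write $\mathcal{L}(\omega) := \Pr_Y\left(|\Alg(Y,\omega)| \geq (2+\eps)\log_2 n\right)$ for each fixed $\omega \in \Omega$. We take $Y$ independent of $\omega$, which is the setting of Definition~\ref{def:online}, where the random string is an independent auxiliary input. By Fubini/the tower property,
\begin{align*}
\Pr_{Y,\omega}\left(|\Alg(Y,\omega)| \geq (2+\eps)\log_2 n\right) = \E_\omega\left[\mathcal{L}(\omega)\right].
\end{align*}
This requires the event $\{(Y,\omega) : |\Alg(Y,\omega)| \geq (2+\eps)\log_2 n\}$ to be jointly measurable. That holds because $Y$ takes finitely many values, and for each fixed $Y$ the map $\omega \mapsto |\Alg(Y,\omega)|$ is measurable, as is implicit in $\Alg$ being a randomized algorithm.

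An expectation cannot exceed the supremum of its integrand, so $\E_\omega[\mathcal{L}(\omega)] \leq \sup_\omega \mathcal{L}(\omega)$. We need the supremum to be attained. Since $Y$ ranges over the finite set of pairs of graphs on $[n]$, $\mathcal{L}(\omega)$ is a sum of $\Pr(Y=y)$ over a subset of this finite set. Hence $\mathcal{L}$ takes only finitely many values, and its maximum is attained at some $\omega^*$. Alternatively, without any maximization: if $\mathcal{L}(\omega) < \E_\omega[\mathcal{L}]$ for every $\omega$, integrating gives a strict contradiction. Either way there is an $\omega^*$ with $\mathcal{L}(\omega^*) \geq \E_\omega[\mathcal{L}]$, which is~\eqref{line:better-deterministic}.

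Finally, we check that $\Alg^*(Y) := \Alg(Y,\omega^*)$ is a legitimate deterministic member of $\online(n)$. Fixing $\omega$ leaves every choice in Definition~\ref{def:online} a function only of the information revealed so far, so $\Alg^*$ satisfies the same constraints. There is no real obstacle here; the only point needing care is the measurability and attainment issue, which finiteness of the input space resolves.
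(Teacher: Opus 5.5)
Your proposal is correct and follows the same averaging argument as the paper: write the randomized success probability as $\E_\omega$ of the per-seed success probability, then pick a seed at least as good as the average. Your extra remarks on joint measurability and on the maximum being attained (because $Y$ takes finitely many values) are sound details that the paper leaves implicit.
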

\begin{proof}
    Using the law of total expectation, we have that
    \begin{align*}
        \Pr_{Y, \omega}(|\Alg(Y, \omega))| \geq (2 + \eps)\log_2 n) = \E_\omega[\Pr_Y(|\Alg(Y, \omega)| \geq (2 + \eps)\log_2 n)].
    \end{align*}
    By an averaging argument, there exists some $\omega^* \in \Omega$ such that $\Alg^{*}(Y) := \Alg(Y, \omega^*)$ satisfies~\eqref{line:better-deterministic}.
\end{proof}

Due to Lemma~\ref{lem:exists-deterministic}, 
in order to prove Theorem~\ref{thm:impossibility-formal}, it suffices to prove it for deterministic online algorithms. 
In the following we thus only consider deterministic online algorithms. 
In particular, this means that the only source of randomness in what follows is the input $Y$. 

We now formally define the interpolation paths and forbidden structures that will be key to our analysis. These definitions closely follow the analogous structures in \cite{GKW25}, 
though we take care to 
adapt them to the ``two-dimensional'' nature of the $\MCS(n)$ problem. 

\begin{definition}[Interpolation paths]\label{def:interpolation-paths}
    Fix a deterministic $\Alg \in \online(n)$, $m \in \N$, and sample $Y = (G_{1}, G_{2}) \sim \Gnhalf^{\otimes 2}$. For $t \in [n]$, the $t^\text{th}$ \emph{interpolation path} is denoted $Y_1^{(t)}, \dots, Y_m^{(t)}$, where $Y_i^{(t)} = \left(G_{i, 1}^{(t)}, G_{i, 2}^{(t)}\right)$ and each $G_{i,j}^{(t)}$ is a graph on vertex set $[n]$. First, set
    \begin{align}
        Y_1^{(t)} := Y \qquad \forall t \in [n]. \label{def:first-Y}
    \end{align}
    The remaining inputs are constructed based on the behavior of $\Alg$ on input $Y$. 
    Let $P_1^{(t)}$ and $P_2^{(t)}$ denote the vertices in $G_1$ and $G_2$ (respectively) processed by $\Alg$ on input $Y$ after the first $t$ steps. 
    For every $t \in [n]$, $i \in [m]$, and $j \in \{1, 2\}$, let $e_{i, j}^{(t)}(\{u, v\})$ denote the edge status of the pair $\{u, v\}$ in~$G_{i, j}^{(t)}$. 
    Then, for each $t \in [n]$, $i \in \{2, \ldots, m \}$, and $j \in \{1, 2\}$, we set
    \begin{align}
        e_{i,j}^{(t)}(\{u, v\}) &:= e_{1, j}^{(t)}(\{u, v\}) \qquad \forall \{u, v\} \in \binom{P_j^{(t)}}{2}, \label{def:same-edges} \\
        e_{i,j}^{(t)}(\{u, v\}) &\overset{\text{iid}}{\sim} \mathrm{Bern}(1/2) \qquad \forall \{u, v\} \notin \binom{P_j^{(t)}}{2}, \label{def:iid-edges}
    \end{align}
    where all of the random bits in~\eqref{def:iid-edges} are independent of everything else.
\end{definition}
See Figure~\ref{fig:interpolation-paths} for an illustration of these interpolation paths. 

\begin{remark}\label{rem:interpret-interpolation}
    Define the family of interpolation paths $\{Y_1^{(t)}, \dots, Y_m^{(t)}\}_{t \in [n]}$ as above. Let $\Gamma_i^{(t)} \in \{0,1\}^{2 \binom{t}{2}}$ encode the edge status of every pair of vertices that $\Alg$ has inspected after $t$ steps on input~$Y_i^{(t)}$. Then~\eqref{def:first-Y} and~\eqref{def:same-edges} ensure that
    \begin{align*}
        \Gamma^{(t)} := \Gamma_1^{(t)} = \cdots = \Gamma_m^{(t)}.
    \end{align*}
    Since $\Alg$ makes deterministic decisions based only on the information it has received by that point, we have that $\Alg^{(t)}(Y_i^{(t)}) = \Alg^{(t)}(Y_{i'}^{(t)})$ for all $i, i' \in [m]$. Moreover,~\eqref{def:iid-edges} ensures that the inputs $\{Y_i^{(t)}\}_{i \in [m]}$ are mutually independent (and identically distributed), conditioned on $\Gamma^{(t)}$.  Combining this with~\eqref{def:first-Y}, we have that marginally
    \begin{align*}
        Y_i^{(t)} \sim \Gnhalf^{\otimes 2} \qquad \forall i \in [m], t \in [n].
    \end{align*}
\end{remark}

\begin{definition}[Forbidden structures]\label{def:forbidden-structures}
    Fix a deterministic $\Alg \in \online(n), m \in \N$, and $\eps > 0$, 
    let $Y = (G_1, G_2) \sim \Gnhalf^{\otimes 2}$, 
    and consider a family of interpolation paths $\{Y_1^{(t)}, \dots, Y_m^{(t)}\}_{t \in [n]}$. 
    For any $t \in [n]$, let $X_{m,t}$ count the number of $m$-tuples $\left\{\sigma_1, \dots, \sigma_m\right\}$ such that for every $i \in [m]$ the following two properties hold.
    \begin{enumerate}
        \item \emph{(Large solutions.)} We have that $\sigma_i = \left(H_{i,1}, H_{i,2}\right)$ is a solution to the $\MCS(n)$ problem on instance $Y_i^{(t)}$, with \label{prop:large-sol}
        \begin{align*}
            |\sigma_i| \geq (2 + \eps) \log_2 n.
        \end{align*} 
        \item \emph{(Overlap structure.)} For $j \in \{1, 2\}$ and $i \in [m]$ we have that \label{prop:two}
        \begin{align*}
            V(H_{1,j}) \cap P_j^{(t)} &= V(H_{i,j}) \cap P_j^{(t)}, \\
            \left| V(H_{1,j}) \cap P_j^{(t)} \right| &= \lceil (2 - \eps/2) \log_2 n\rceil. %
        \end{align*}
    \end{enumerate}
\end{definition}
See Figure~\ref{fig:flowers} for an illustration of these forbidden structures. 
With these definitions we are now ready to prove Theorem~\ref{thm:impossibility-formal}. 

\begin{proof}[Proof of Theorem~\ref{thm:impossibility-formal}]
    Recall from Lemma~\ref{lem:exists-deterministic} that we may and hence will assume that the algorithm $\Alg$ is deterministic.
    Given $\eps > 0$, we set the constant $m$ to be 
    \begin{align}
        m := \left\lceil \frac{40}{3\eps^2}\right\rceil. \label{def:m}
    \end{align}
    For notational convenience, in all of the arguments that follow, let
    \[
        \gamma := 2 - \frac \eps 2 
        \qquad 
        \text{and}
        \qquad 
        L := \log_2 n. 
    \]
    Recall that the input to the algorithm $\Alg$ is $Y = (G_{1}, G_{2}) \sim \Gnhalf^{\otimes 2}$. Sample a family of interpolation paths $\{Y_1^{(t)}, \dots, Y_m^{(t)}\}_{t \in [n]}$ based on $\Alg, m$, and $Y$ according to Definition~\ref{def:interpolation-paths}. 
    Define the stopping time $\tau$ to be
    \begin{align}
        \tau := \min\left\{\min \left\{1 \leq t \leq n: \left| \Alg^{(t)} (Y) \right| = \lceil\gamma L\rceil \right\}, n\right\}. \label{def:tau}
    \end{align}
    In words, $\tau$ is the first time when the algorithm $\Alg$ on input $Y$ reaches a (partial) solution of size \emph{close} to half-optimal (specifically, of size at least $(2-\eps/2) \log_{2} n$).
    Next, define the events
    \begin{align*}
        \findOneLarge := \{|\Alg(Y)| \geq (2 + \eps)L\} %
        \qquad 
        \text{and}
        \qquad
        \findManyLarge := \bigcap_{i = 1}^m \left\{ \left| \Alg \left( Y_i^{(\tau)} \right) \right| \geq (2 + \eps)L \right\} .
    \end{align*}
    Observe that $\findOneLarge$ is the event that we care about: that algorithm $\Alg$ on input $Y$ finds a solution \emph{beyond} half-optimal (specifically, of size at least $(2+\eps) \log_{2} n$). Furthermore, $\findManyLarge$ is the event that this is true for all of the inputs along the $\tau$-th interpolation path, namely for $Y_{1}^{(\tau)}, \ldots, Y_{m}^{(\tau)}$. 
    
    Recall the definition of $X_{m,t}$ from Definition~\ref{def:forbidden-structures}. 
    The core of the proof lies in the following chain of inequalities:
    \begin{equation}\label{eq:key_inequalities}
        \Pr(\findOneLarge)^m 
        \leq \Pr(\findManyLarge) 
        \leq \sum_{t=1}^n \Pr(X_{m,t} \geq 1) 
        \leq n^{-\log_2 n + o(\log n)}.
    \end{equation}
    We defer the proofs of these inequalities to Lemmas~\ref{lem:S-below},~\ref{lem:S-above}, and~\ref{lem:multi-OGP}, respectively, below.
    Theorem~\ref{thm:impossibility-formal} then follows by taking an $m$-th root in~\eqref{eq:key_inequalities}, noting that $m$ is constant (depending only on~$\eps$). 
\end{proof}

The rest of this section is devoted to proving the inequalities in~\eqref{eq:key_inequalities}. 
Lemma~\ref{lem:S-below} is an analogue of \cite[Lemma~4.6]{GKW25}; we include its proof here for completeness.
\begin{lemma}\label{lem:S-below}
    Defining $m$, $\findOneLarge$, and $\findManyLarge$ as in the proof of Theorem~\ref{thm:impossibility-formal}, we have that $\Pr(\findOneLarge)^m \leq \Pr(\findManyLarge)$.
\end{lemma}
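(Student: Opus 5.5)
The plan is to condition on the information revealed up to the stopping time $\tau$, use conditional independence of the $m$ inputs along the $\tau$-th interpolation path, and then apply Jensen's inequality. Let $\mathcal{F}$ denote the $\sigma$-algebra generated by the run of $\Alg$ on $Y$ up to time $\tau$. Concretely, for each $t$, on the event $\{\tau = t\}$, $\mathcal{F}$ records $\Gamma^{(t)}$, the edge statuses among $P_1^{(t)}$ and $P_2^{(t)}$.

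A first point to check is that $\tau$ is determined by $\Gamma^{(\tau)}$. Since $\Alg$ is deterministic, $\Alg^{(s)}(Y)$ for $s \le t$ is a function of $\Gamma^{(t)}$. Hence the event $\{\tau = t\}$ is a function of $\Gamma^{(t)}$, so $\tau$ is a stopping time with respect to the revealed information.

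The key step is the following. On $\{\tau = t\}$ and given $\Gamma^{(t)}$, the inputs $Y_1^{(t)}, \dots, Y_m^{(t)}$ are conditionally i.i.d. by Remark~\ref{rem:interpret-interpolation}. For this, I must check that conditioning on $\{\tau = t\}$ does not spoil that conditional law. It does not, because $\{\tau=t\}$ is $\Gamma^{(t)}$-measurable and the fresh bits in~\eqref{def:iid-edges} are independent of $Y$. The unrevealed bits of $Y_1^{(t)} = Y$ are also independent of $\Gamma^{(t)}$, since $P_j^{(t)}$ is itself a function of $\Gamma^{(t-1)}$, so conditionally they are uniform too.

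Hence, writing $p(\Gamma^{(t)}) := \Pr(|\Alg(Y)| \ge (2+\eps)L \mid \Gamma^{(t)})$, I get
\begin{align*}
\Pr(\findManyLarge) = \sum_{t=1}^n \E\left[\mathbf{1}\{\tau=t\}\, p(\Gamma^{(t)})^m\right] = \E\left[ p_\tau^m \right],
\end{align*}
where $p_\tau := \Pr(\findOneLarge \mid \mathcal{F})$. This is the step I expect to need the most care: it requires $\{\tau = t\}$, the construction of the path at the random index $\tau$, and the conditional laws to fit together, which is essentially the argument of \cite[Lemma~4.6]{GKW25}.

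Finally, Jensen's inequality for the convex map $x\mapsto x^m$ on $[0,1]$ gives
\[
\E[p_\tau^m] \ge (\E[p_\tau])^m = \Pr(\findOneLarge)^m,
\]
where the equality is the tower property. This proves $\Pr(\findOneLarge)^m \le \Pr(\findManyLarge)$.
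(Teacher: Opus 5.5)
Your proposal is correct and is essentially the paper's argument. You decompose over $\{\tau=t\}$, use that $\{\tau=t\}$ is $\Gamma^{(t)}$-measurable and that $Y_1^{(t)},\dots,Y_m^{(t)}$ are conditionally i.i.d.\ given $\Gamma^{(t)}$, obtain $\Pr(\findManyLarge)=\E[p_\tau^m]$, and finish with Jensen; your $p_\tau$ is exactly the paper's $\sum_{t}\mathbf{1}\{\tau=t\}\Pr(\findOneLarge\mid\Gamma^{(t)})$, which the paper uses in place of the stopped $\sigma$-algebra $\mathcal{F}$, noting that only one indicator is nonzero so the $m$-th power passes through the sum.
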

\begin{proof}
    As in Remark~\ref{rem:interpret-interpolation}, let $\Gamma^{(t)}$ be the random variable that encodes all the information processed by $\Alg$ through step $t$ on input $Y$; that is, $\Gamma^{(t)}$ encodes the edge status of every pair $\{u_1, v_1\} \in \binom{P_1^{(t)}}{2}$ in $G_{1}$ and $\{u_2, v_2\} \in \binom{P_2^{(t)}}{2}$ in $G_{2}$. Since $\tau \in [n]$ deterministically, by the law of total probability we can write $\Pr(\findOneLarge)$ as follows:
    \begin{align}
        \Pr(\findOneLarge) 
        &= \sum_{t = 1}^n \Pr(\findOneLarge \cap \{\tau = t\}) %
        = \sum_{t = 1}^n \E\left[\Pr\left(\findOneLarge \cap \{\tau = t\} \, \middle| \, \Gamma^{(t)}\right)\right] \notag \\
        &= \sum_{t = 1}^n \E\left[\mathbf{1}{\{\tau = t\}}\Pr\left(\findOneLarge \, \middle| \, \Gamma^{(t)}\right)\right] %
        = \E\left[\sum_{t = 1}^n \mathbf{1}{\{\tau = t\}}\Pr\left(\findOneLarge \, \middle| \, \Gamma^{(t)}\right)\right], \notag
    \end{align}
    where the third equality follows because the event $\{\tau = t\}$ is determined by $\Gamma^{(t)}$. Considering now the right-hand side of the desired inequality, $\Pr(\findManyLarge)$, the same reasoning gives that 
    \begin{align*}
        \Pr(\findManyLarge) = \E\left[\sum_{t = 1}^n \mathbf{1}{\{\tau = t\}} \Pr\left(\findOneLarge_{1,t} \cap \cdots \cap \findOneLarge_{m,t} \, \middle| \, \Gamma^{(t)}\right)\right],
    \end{align*}
    where $\findOneLarge_{i, t} := \{|\Alg(Y_i^{(t)})| \geq (2 + \eps) \log_2 n\}$. 
    
    Now we use an important property of the construction of interpolation paths in Definition~\ref{def:interpolation-paths}: all edges in $Y_{1}^{(t)}, \ldots, Y_{m}^{(t)}$ that are revealed after step $t$ are i.i.d. 
    In particular, this implies that conditioned on $\Gamma^{(t)}$, the events $\findOneLarge_{1, t}, \ldots, \findOneLarge_{m, t}$ are mutually independent (see Remark~\ref{rem:interpret-interpolation}). 
    Thus, 
    \begin{align*}
        \Pr\left(\findOneLarge_{1,t} \cap \cdots \cap \findOneLarge_{m,t} \, \middle| \, \Gamma^{(t)}\right) 
        = \prod_{i = 1}^m \Pr\left(\findOneLarge_{i,t} \, \middle| \, \Gamma^{(t)}\right) 
        = \Pr\left(\findOneLarge_{1, t} \, \middle| \, \Gamma^{(t)}\right)^m 
        = \Pr\left(\findOneLarge \, \middle| \, \Gamma^{(t)}\right)^m.
    \end{align*}
    Note that the random variable $\mathbf{1}{\{\tau = t\}}$ is nonzero for exactly one term in the sum over $t \in [n]$. As such we have that
    \begin{align*}
        \sum_{t = 1}^n \mathbf{1}{\{\tau = t\}} \Pr\left(\findOneLarge \, \middle| \, \Gamma^{(t)}\right)^m 
        = \left(\sum_{t = 1}^n \mathbf{1}{\{\tau = t\}} \Pr\left(\findOneLarge \, \middle| \, \Gamma^{(t)}\right)\right)^m.
    \end{align*}
    Finally, since $f(x) = x^m$ is convex for $m \geq 1$ and $x \geq 0$, we can apply Jensen's inequality to get
    \begin{equation*}
        \Pr(\findOneLarge)^m 
        = \left( \E\left[\sum_{t = 1}^n \mathbf{1}{\{\tau = t\}}\Pr\left(\findOneLarge \, \middle| \, \Gamma^{(t)}\right)\right] \right)^m 
        \leq \E\left[\left(\sum_{t = 1}^n \mathbf{1}{\{\tau = t\}}\Pr\left(\findOneLarge \, \middle| \,  \Gamma^{(t)}\right)\right)^m\right] 
        = \Pr(\findManyLarge). \qedhere
    \end{equation*}
\end{proof}

We now turn to the second inequality in~\eqref{eq:key_inequalities}. 
Lemma~\ref{lem:S-above} is an analogue of \cite[Lemma~4.8]{GKW25}; we include its proof here for completeness.
\begin{lemma}\label{lem:S-above}
    Define $\findManyLarge$ as in the proof of Theorem~\ref{thm:impossibility-formal} and $X_{m,t}$ as in Definition~\ref{def:forbidden-structures}. We have that
    \begin{align*}
        \Pr(\findManyLarge) \leq \sum_{t = 1}^n \Pr(X_{m,t} \geq 1).
    \end{align*}
\end{lemma}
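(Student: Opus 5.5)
The plan is to show the deterministic inclusion $\findManyLarge \subseteq \{X_{m,\tau} \geq 1\}$ and then take a union bound over the possible values of $\tau$:
\[
\Pr(\findManyLarge) \leq \Pr(X_{m,\tau}\geq 1) = \sum_{t=1}^n \Pr(X_{m,t}\geq 1,\ \tau=t) \leq \sum_{t=1}^n \Pr(X_{m,t}\geq 1).
\]
So everything rests on checking that, on $\findManyLarge$, the outputs $\sigma_i := \Alg(Y_i^{(\tau)})$ for $i\in[m]$ form one of the $m$-tuples counted by $X_{m,\tau}$.

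\textbf{Step 1: $\tau$ is a genuine hitting time.} Since $Y_1^{(\tau)} = Y$, on $\findManyLarge$ we have $|\Alg(Y)| \geq (2+\eps)L > \lceil \gamma L\rceil$. By Definition~\ref{def:online}, the partial solution grows by at most one vertex pair per round and starts empty. Hence the size $\lceil\gamma L\rceil$ is attained at some round, and $\tau$ is the first such round rather than the default value $n$.

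\textbf{Step 2: the solutions share a common core.} By Remark~\ref{rem:interpret-interpolation}, all the inputs $Y_i^{(\tau)}$ reveal identical information $\Gamma^{(\tau)}$ through round $\tau$. They therefore produce identical processed sets $P_j^{(\tau)}$ and identical partial solutions $S_j^{(\tau)}$, with $|S_j^{(\tau)}| = \lceil\gamma L\rceil$. Because decisions are irrevocable, $S_j^{(\tau)} \subseteq V(H_{i,j})$ for every $i\in[m]$. Property~\ref{prop:large-sol} of Definition~\ref{def:forbidden-structures} holds directly from $\findManyLarge$: each output $\sigma_i$ is a valid solution on $Y_i^{(\tau)}$ and has size at least $(2+\eps)L$.

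\textbf{Step 3 (main obstacle): the overlap structure.} Property~\ref{prop:two} requires $V(H_{i,j})\cap P_j^{(\tau)} = S_j^{(\tau)}$ exactly. The difficulty is that Definition~\ref{def:online} allows delayed additions. At a round $s>\tau$, the algorithm may add a pair in which one vertex is an old processed vertex from $P_j^{(\tau)}\setminus S_j^{(\tau)}$, provided the other vertex is the newly processed $v_{j'}^{(s)}$ in the other graph.

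My first attempt would be to prune each $\sigma_i$ by its isomorphism $\pi_i$. This means discarding every pair added after $\tau$ that has one endpoint in $P^{(\tau)}$. Removing matched pairs keeps the subgraphs induced and isomorphic, and the remaining post-$\tau$ vertices lie outside $P^{(\tau)}$ on both sides. However, this could destroy the size bound $(2+\eps)L$.

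If pruning fails, I would instead argue that the first-moment count in Lemma~\ref{lem:multi-OGP} only uses that the common intersection has size $\lceil\gamma L\rceil$ and that the vertices outside it carry fresh randomness. I would then check whether $X_{m,t}$ can be read as counting tuples whose intersection with $P_j^{(t)}$ contains a common core of that size, with the extra old vertices absorbed into the petals. Old vertices in the petals still have at least one fresh endpoint in each constrained pair, since its partner in $\pi_i$ is new.

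Once Property~\ref{prop:two} is secured in either form, the inclusion follows, and the displayed union bound completes the proof.
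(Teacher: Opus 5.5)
Your skeleton matches the paper's proof: show that $\findManyLarge\cap\{\tau=t\}$ forces $X_{m,t}\ge 1$ with $\sigma_i:=\Alg(Y_i^{(\tau)})$, then sum over $t$. Steps~1 and~2 are correct; Step~1 is a detail the paper leaves implicit. The paper settles Property~\ref{prop:two} in one line, as a consequence of Remark~\ref{rem:interpret-interpolation} and $\{\tau=t\}$. Your Step~3 is right that this does not follow as stated. The remark plus irrevocability give only $S_j^{(\tau)}\subseteq V(H_{i,j})\cap P_j^{(\tau)}$. Definition~\ref{def:online} explicitly allows adding, at a round $s>\tau$, a vertex of $P_1^{(\tau)}\setminus S_1^{(\tau)}$ together with $v_2^{(s)}$, and Greedy does exactly this. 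So the intersection can exceed $\lceil\gamma L\rceil$, even for $i=1$, and can differ across $i$. The paper's one-line argument does not address this.

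However, your proposal does not close the gap, so it is not yet a proof.

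\emph{Pruning.} It fails for the reason you suspect: nothing prevents post-$\tau$ pairs from containing old vertices, and deleting them can push the pruned $\sigma_i$ below $(2+\eps)L$. Also, the claim that removing matched pairs preserves isomorphism presumes the round-by-round pairing $\rho_i$ is an isomorphism. Definition~\ref{def:online} does not require that; it only asks $G_1[S_1^{(t)}]\cong G_2[S_2^{(t)}]$.

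\emph{Reinterpreting $X_{m,t}$.} This changes the statement, and Lemma~\ref{lem:multi-OGP} would have to be re-proved for the new count. A purely set-level relaxation, where the intersection with $P_j^{(t)}$ merely \emph{contains} a common core, is useless. For $t\ge n/2$, the shared graphs $G_1[P_1^{(t)}]$ and $G_2[P_2^{(t)}]$ are, after relabeling by processing order, independent copies of $\mathbb{G}(t,1/2)$. By the Chatterjee--Diaconis theorem they contain, with high probability, a common induced subgraph of size $(4-o(1))\log_2 n$. Taking every $\sigma_i$ equal to it makes the relaxed count positive.

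The condition in your last sentence (old petal vertices are matched to fresh ones) is the right one. But it must be written into the definition as a constraint on a designated bijection. Count tuples $(\sigma_i,\pi_i)$ such that:
\begin{enumerate}
\item $\pi_i$ is an isomorphism;
\item there is a common core $C_j\subseteq V(H_{i,j})$ of size $\lceil\gamma L\rceil$;
\item $u\notin P_1^{(t)}$ or $\pi_i(u)\notin P_2^{(t)}$ for every $u\in V(H_{i,1})\setminus C_1$.
\end{enumerate}
Then every pair whose status is fixed by $\Gamma^{(t)}$ lies in $\binom{C_1}{2}$. So Lemmas~\ref{lem:Z-counting} and~\ref{lem:Z-probability} go through essentially verbatim: petals are already drawn from $[n]$, and the sum over $\pi_i$ still costs at most $(6L)!$.

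Your inclusion then holds with $C_j=S_j^{(\tau)}$ and $\pi_i=\rho_i$ --- \emph{provided} $\rho_i$ is an isomorphism. That means online algorithms must be required to maintain an isomorphism that extends by each added pair, as Greedy does. Under Definition~\ref{def:online} read literally, an isomorphism of the final solution may send old non-core vertices to old vertices, and a further argument would be needed.
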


\begin{proof}
    As in the proof of Lemma~\ref{lem:S-below}, the law of total probability gives that
    \begin{align*}
        \Pr(\findManyLarge) = \sum_{t = 1}^n \Pr\left(\findManyLarge \cap \{\tau = t\}\right),
    \end{align*}
    so it suffices to show that for every $t \in [n]$ the event $\findManyLarge \cap \{\tau = t\}$ implies $X_{m,t} \geq 1$. Fix any $t \in [n]$ and suppose that $\findManyLarge \cap \{\tau = t\}$ occurs. For each $i \in [m]$, define
    \begin{align*}
        \sigma_i := \Alg(Y_i^{(\tau)}).
    \end{align*}
    To see that $\{\sigma_1, \dots, \sigma_m\}$ is a forbidden structure and hence $X_{m,t} \geq 1$, note that Property~\ref{prop:large-sol} of Definition~\ref{def:forbidden-structures} is directly implied by $\findManyLarge$ and Property~\ref{prop:two} is implied by Remark~\ref{rem:interpret-interpolation} and $\{\tau = t\}$.
\end{proof}

Finally, we turn to proving the third inequality in~\eqref{eq:key_inequalities}. This establishes a variant of the OGP for the $\MCS(n)$ problem. 

\begin{lemma}\label{lem:multi-OGP}
    Define $X_{m,t}$ as in Definition~\ref{def:forbidden-structures}. We have that 
    \begin{align*}
        \sum_{t = 1}^n \Pr(X_{m,t} \geq 1) \leq n^{-\log_2 n + o(\log n)}.
    \end{align*}
\end{lemma}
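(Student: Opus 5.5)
We plan a first moment argument: for each $t\in[n]$ we bound $\Pr(X_{m,t}\ge 1)\le \E[X_{m,t}]$, prove $\E[X_{m,t}]\le n^{-L+o(L)}$ uniformly in $t$, and absorb the factor $n$ from the sum over $t$ into the $o(\log n)$ term.

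We first reduce to solutions of bounded size. By the result of Chatterjee and Diaconis~\cite{CD23} (or directly by a first moment bound on a single graph pair), with probability $1-n^{-\omega(\log n)}$ no $Y_i^{(t)}$ has a common induced subgraph of size above $5L$. We will verify this tail is small enough, or else we can simply count solutions of size exactly $\lceil(2+\eps)L\rceil$, since restricting a large solution to a subset containing the disc $C_j$ preserves both properties of Definition~\ref{def:forbidden-structures}. We therefore fix sizes $\alpha_i L$ with $\alpha_i\in[2+\eps,5]$ (only $O(L^m)$ choices) and, following the union bound described in Section~\ref{sec:impossibility-ideas}, also fix the common discs $C_1,C_2$ with $|C_j|=\lceil\gamma L\rceil$, the petals of size $(\alpha_i-\gamma)L$ in each graph, and bijections $\pi_i$. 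By~\eqref{OGP-counting-informal} there are at most $n^{2L(\gamma+\sum_i(\alpha_i-\gamma))}$ choices of vertex sets, and at most $(5L)^{5Lm}=n^{o(L)}$ choices of bijections.

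Next we bound, for a fixed configuration, the probability that every $\pi_i$ is an isomorphism. We condition on $\Gamma^{(t)}$. Given $\Gamma^{(t)}$, the edges of $Y_i^{(t)}$ outside $\binom{P_j^{(t)}}{2}$ are independent fair bits across all $i$ (for $i=1$ as well, since $\Gamma^{(t)}$ reveals only the processed pairs). One subtlety is that $P_j^{(t)}$ is random, and the condition $V(H_{i,j})\cap P_j^{(t)}=C_j$ depends on it. We handle this by summing over configurations and working conditionally on $\Gamma^{(t)}$, which determines $P_j^{(t)}$. Then the petal vertices in graph $j$ lie outside $P_j^{(t)}$, so every vertex pair of $V(H_{i,1})$ not contained in $C_1$ has a free edge bit in $G_{i,1}^{(t)}$. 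The isomorphism constraint from each such pair is a single independent fair coin, which gives probability $2^{-(\binom{\alpha_iL}{2}-\binom{\lceil\gamma L\rceil}{2})}$ for each $i$. Independence across $i$ holds because distinct $i$ use distinct graphs. We take the product over $i$.

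For the exponent, multiplying the count and the probability gives an exponent of $n$ equal to $-\frac L2\left(-4\gamma+\sum_i[(\alpha_i^2-4\alpha_i)-(\gamma^2-4\gamma)]\right)+o(L)$. This uses $\binom{aL}{2}=a^2L^2/2-O(L)$; the rounding errors contribute only $O(L)$ in the exponent of $2$. We then use $\alpha^2-4\alpha\ge\eps^2-4$ for $\alpha\ge2+\eps$, together with $\gamma^2-4\gamma=\eps^2/4-4$ and $4\gamma\le 8$. Each summand is then at least $3\eps^2/4$, so the bracket is at least $-8+m\cdot3\eps^2/4\ge 2$ by~\eqref{def:m}, and $\E[X_{m,t}]\le n^{-L+o(L)}$.

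The main obstacle is the edge case where $t$ is close to $n$, when fewer than $(\alpha_i-\gamma)L$ unprocessed vertices remain. In that case a petal might have to intersect $P_j^{(t)}$, which contradicts the overlap structure, so $X_{m,t}=0$ trivially. More precisely, if $n-t<(\eps/2)L$ (and also when $\tau=n$ with no disc of the right size) the structure cannot exist, and we will check this carefully. A second delicate point is to make the conditioning on the random sets $P_j^{(t)}$ rigorous without double counting; summing over realizations of $\Gamma^{(t)}$ handles this.
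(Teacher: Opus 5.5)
Your proposal is correct and is essentially the paper's argument: you truncate very large solutions with a single-pair first-moment bound (the paper uses $6L$, you use $5L$), then apply the first moment method to bounded-size flowers, using the count $n^{2L(\gamma+\sum_i(\alpha_i-\gamma))}$, the conditional probability $2^{-\sum_i(\binom{\alpha_iL}{2}-\binom{\lceil\gamma L\rceil}{2})}$ from the free edge bits given $\Gamma^{(t)}$, an $n^{o(L)}$ union bound over bijections, and $m=\lceil 40/(3\eps^2)\rceil$. Two small corrections: the probability that some $Y_i^{(t)}$ has a common induced subgraph on $\lceil 5L\rceil$ vertices is $n^{-\Theta(\log n)}$, not $n^{-\omega(\log n)}$; the first-moment bound gives $m\,n^{-5L/2+O(1)}$, which still beats $n^{-L}$, whereas \cite{CD23} supplies no such tail. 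Also, your fallback of restricting each solution to size exactly $\lceil(2+\eps)L\rceil$ does not work, because restricting $\pi_i$ while keeping both discs forces you to retain $C_1\cup\pi_i^{-1}(C_2)$, which can have about $(4-\eps)L$ vertices.
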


We first deal with some simple edge cases in $t$. 
Note that if $t < \lceil (2 - \eps/2) \log_2 n\rceil$, then $X_{m,t} = 0$ deterministically. This is because then 
$|V(H_{1,1}) \cap P_{1}^{(t)}| \leq |P_{1}^{(t)}| = t < \lceil (2 - \eps/2) \log_2 n\rceil$, 
and so Property~\ref{prop:two} of Definition~\ref{def:forbidden-structures} cannot hold. 
Similarly, we claim that if 
\[
t > n - \left( \lceil (2 + \eps) \log_2 n\rceil - \lceil (2 - \eps/2) \log_2 n\rceil \right),
\]
then $X_{m,t} = 0$ deterministically. 
To see this, observe that if Property~\ref{prop:two} of Definition~\ref{def:forbidden-structures} holds, 
then the solution $\sigma_{i}$ at time $n$ can have size at most 
$\lceil (2 - \eps/2) \log_2 n\rceil + (n-t)$. 
Thus, in order for Property~\ref{prop:large-sol} to hold, 
we must have that 
$\lceil (2 - \eps/2) \log_2 n\rceil + (n-t) 
\geq (2+\eps) \log_{2} n$. 

Thus in the following we focus on times $t$ satisfying 
\begin{equation}\label{eq:non_edge_cases}
\lceil (2 - \eps/2) \log_2 n\rceil 
\leq t 
\leq n - \left( \lceil (2 + \eps) \log_2 n\rceil - \lceil (2 - \eps/2) \log_2 n\rceil \right).
\end{equation}
For $t \in [n]$ that does not satisfy~\eqref{eq:non_edge_cases}, the previous paragraph shows that 
$\Pr \left( X_{m,t} \geq 1 \right) = 0$. 

To analyze $X_{m,t}$, it is useful to use a truncation argument; 
to this end, we introduce the following quantities.
\begin{definition}\label{def:WZ}
Let $W_{m,t}$ count the number of $m$-tuples $\{\sigma_1, \dots, \sigma_m\}$ that satisfy Definition~\ref{def:forbidden-structures} and have $\max_{i \in [m]} |\sigma_i| > 6 \log_2 n$. 
Similarly, let 
$Z_{m,t}$ count the number of $m$-tuples $\{\sigma_1, \dots, \sigma_m\}$ that satisfy Definition~\ref{def:forbidden-structures} and have $\max_{i \in [m]} |\sigma_i| \leq 6 \log_2 n$. 
\end{definition}
By definition, we have that 
$X_{m,t} = W_{m,t} + Z_{m,t}$. 
Therefore, we have that
\begin{equation}\label{eq:triangle_ineq}
\Pr\left( X_{m,t} \geq 1 \right) 
\leq \Pr \left( W_{m,t} \geq 1 \right)  + \Pr \left( Z_{m,t} \geq 1 \right),
\end{equation}
and the following two claims bound these latter two quantities.

\begin{claim}\label{clm:big-tuples}
    Define $W_{m,t}$ as in Definition~\ref{def:WZ}. For every $t \in [n]$ we have that 
    \begin{align*}
        \Pr(W_{m,t} \geq 1) \leq m n^{-\log_2 n + o(\log n)}.
    \end{align*}
\end{claim}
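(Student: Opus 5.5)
If $W_{m,t} \geq 1$, then some $i \in [m]$ has $|\sigma_i| > 6\log_2 n$. In that case $Y_i^{(t)}$ contains a common induced subgraph of size at least $k := \lceil 6\log_2 n\rceil$: restrict $\sigma_i$ to any $k$ of its vertex pairs matched under an isomorphism. By Remark~\ref{rem:interpret-interpolation}, each $Y_i^{(t)}$ is marginally distributed as $\Gnhalf^{\otimes 2}$. A union bound over $i \in [m]$ therefore gives
\[
\Pr(W_{m,t} \geq 1) \leq m \cdot \Pr\bigl(Y \text{ has a common induced subgraph on exactly } k \text{ vertices}\bigr), \qquad Y \sim \Gnhalf^{\otimes 2}.
\]
This is where the factor $m$ comes from. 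The overlap structure and the conditioning on $\Gamma^{(t)}$ play no further role, so the bound holds uniformly in $t$.

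It remains to bound the single-instance probability by the first moment method. Let $N_k$ count triples $(S_1, S_2, \pi)$ with $S_j \subseteq [n]$, $|S_j| = k$, and $\pi : S_1 \to S_2$ a bijection that is a graph isomorphism $G_1[S_1] \to G_2[S_2]$. There are at most $\binom{n}{k}^2 k! \leq n^{2k} k^k$ such triples.

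For a fixed triple, the $\binom{k}{2}$ events $\{e_1(\{u,u'\}) = e_2(\{\pi(u),\pi(u')\})\}$ involve disjoint pairs of independent fair bits. Each event therefore has probability $1/2$, and they are mutually independent, so
\[
\E[N_k] \leq n^{2k} k^k 2^{-\binom{k}{2}}.
\]
With $k \approx 6L$ where $L = \log_2 n$, we have $2^{-\binom{k}{2}} = n^{-(k-1)/2}$ and $k^k = n^{O(\log\log n)} = n^{o(L)}$. The exponent of $n$ is then
\[
2k - \tfrac{k-1}{2} + o(L) = -\tfrac{k}{2} + 2k + o(L)\!\!\quad\text{(as a function of $k$, evaluated at $k = 6L$)},
\]
and this requires a more careful count. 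Writing the bound as $\exp_2\bigl(k(2L - (k-1)/2) + o(kL)\bigr)$, at $k = 6L$ it becomes $2^{6L(2L - 3L) + o(L^2)} = n^{-6L + o(L)}$. This is far below the required $n^{-L + o(L)}$. Markov's inequality then yields the claim, with room to spare.

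The only step needing any care is the arithmetic of the exponent: checking that the $k!$ factor and the ceilings contribute only $n^{o(\log n)}$. Both follow from $k = O(\log n)$. No genuine obstacle is expected, since the threshold $6\log_2 n$ sits well above the $4\log_2 n$ existence threshold of \cite{CD23}.
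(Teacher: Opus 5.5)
Your proposal is correct and follows the paper's proof. You take a union bound over $i\in[m]$, using that each $Y_i^{(t)}$ is marginally $\Gnhalf^{\otimes 2}$, and then apply the first-moment bound $\binom{n}{k}^2 k!\,2^{-\binom{k}{2}}$ at $k=\lceil 6\log_2 n\rceil$, which gives $n^{-6\log_2 n+o(\log n)}$. One slip: your intermediate line $2^{-\binom{k}{2}} = n^{-(k-1)/2}$ is wrong, since in fact $2^{-\binom{k}{2}}=\bigl(2^{-(k-1)/2}\bigr)^k\approx n^{-3k}$. Your subsequent $\exp_2\bigl(k(2L-(k-1)/2)\bigr)$ computation is correct, though, and yields the right exponent.
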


\begin{claim}\label{clm:small-tuples}
    Define $Z_{m,t}$ as in Definition~\ref{def:WZ}. Suppose that $t$ satisfies~\eqref{eq:non_edge_cases}. Then we have that 
    \begin{align*}
        \Pr(Z_{m,t} \geq 1) \leq n^m n^{-\log_2 n + o(\log n)}.
    \end{align*}
\end{claim}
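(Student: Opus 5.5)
We bound $\Pr(Z_{m,t}\ge 1)\le \E[Z_{m,t}]$, i.e.\ a first-moment argument, computing $\E[Z_{m,t}]$ conditionally on $\Gamma^{(t)}$ (equivalently on $P_1^{(t)},P_2^{(t)}$ and the revealed edges). Write $c:=\lceil \gamma L\rceil$ and $k_i:=|\sigma_i|=\alpha_i L$. On the event counted by $Z_{m,t}$ we have $(2+\eps)L\le k_i\le 6L$. We first union-bound over the vector $(k_1,\dots,k_m)$: there are at most $(6L)^m$ choices, which is absorbed into $n^{o(L)}$. For fixed sizes, a candidate structure consists of the following data:
\begin{itemize}
\item core sets $C_j\subseteq P_j^{(t)}$ of size $c$ for $j\in\{1,2\}$;
\item for each $i$ and $j$, a petal of $k_i-c$ vertices outside $P_j^{(t)}$;
\item for each $i$, a bijection $\pi_i:V(H_{i,1})\to V(H_{i,2})$ witnessing the isomorphism.
\end{itemize}
Using $\binom{n}{k}\le n^k$ and $k_i!\le (6L)^{6L}=n^{O(\log\log n)}=n^{o(L)}$, the number of candidates is at most
\[
n^{2c+2\sum_i (k_i-c)}\cdot n^{o(L)} .
\]
Here the truncation $k_i\le 6L$ is exactly what keeps both the factorial term and the number of size vectors subpolynomial in $n^{L}$. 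Any leftover rounding losses (from the ceilings in $c$ and in the $k_i$) can be crudely absorbed into the factor $n^m$ allowed in the statement.

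Next we bound the probability that a fixed candidate is realized, i.e.\ that every $\pi_i$ is an isomorphism of $G^{(t)}_{i,1}[V(H_{i,1})]$ onto $G^{(t)}_{i,2}[V(H_{i,2})]$. Consider any pair $\{u,u'\}\subseteq V(H_{i,1})$ that is not contained in $C_1$. Then at least one endpoint lies outside $P_1^{(t)}$, so by Definition~\ref{def:interpolation-paths} its status $e^{(t)}_{i,1}(\{u,u'\})$ is a fresh $\mathrm{Bern}(1/2)$ bit. For $i\ge 2$ this holds by~\eqref{def:iid-edges}. For $i=1$ the bit belongs to $Y$ but has not been revealed by time $t$, hence it is independent of $\Gamma^{(t)}$ and of all bits used by the other paths. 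Moreover, distinct $i$ use disjoint families of fresh bits. Conditioning on all $G_2$-side edges, each constraint $e_{i,1}(\{u,u'\})=e_{i,2}(\{\pi_i u,\pi_i u'\})$ therefore holds with probability exactly $1/2$, independently over pairs and over $i$. Dropping the constraints inside $C_1$, the probability is at most
\[
\prod_{i=1}^m 2^{-\binom{k_i}{2}+\binom{c}{2}} = n^{-\frac{L}{2}\sum_i(\alpha_i^2-\gamma^2)+o(L)} .
\]
A point to check here is that only the $G_1$-side freshness is used, so it does not matter whether $\pi_i$ maps core vertices to petal vertices or vice versa. The restriction~\eqref{eq:non_edge_cases} on $t$ guarantees that petals of the required sizes can exist at all.

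Multiplying the count by the probability and summing over sizes gives
\[
\E[Z_{m,t}]\le n^{m}\, n^{-\frac{L}{2}\left(-4\gamma+\sum_i[(\alpha_i^2-4\alpha_i)-(\gamma^2-4\gamma)]\right)+o(L)} .
\]
The bracket is now handled as in Section~\ref{sec:impossibility-ideas}. Since $\alpha_i\ge 2+\eps$ and $x\mapsto x^2-4x$ is increasing for $x\ge 2$, we have $\alpha_i^2-4\alpha_i\ge \eps^2-4$. Together with $\gamma^2-4\gamma=\eps^2/4-4$ and $4\gamma\le 8$, the exponent's bracket is at least $-8+\frac{3\eps^2}{4}m\ge -8+10=2$ by the choice~\eqref{def:m} of $m$. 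This yields $\E[Z_{m,t}]\le n^m n^{-L+o(L)}$, as claimed.

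I expect the main obstacle to be the independence bookkeeping in the probability step rather than the counting. Specifically, one must justify that, conditionally on $\Gamma^{(t)}$, the $G_1$-side bits for pairs touching a petal are uniform and jointly independent across the $m$ instances, including for $i=1$ where the bits come from the original input $Y$. One must also make sure that the candidate structure is fixed before these bits are examined, so that the union bound is legitimate. The cleanest way to arrange this is to index candidates by data measurable with respect to $\Gamma^{(t)}$ (the sets $C_j$, the petals, and the bijections) and to apply the union bound under the conditional law.
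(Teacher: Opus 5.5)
Your proposal is correct and takes essentially the same route as the paper. The paper also uses a first-moment bound conditional on $\Gamma^{(t)}$, the count $n^{2L[\gamma+\sum_i(\alpha_i-\gamma)]}$ (with an $n^{o(L)}$ factor for the bijections), a probability bound of $\prod_i 2^{-\binom{\alpha_i L}{2}+\binom{\lceil\gamma L\rceil}{2}}$, and the same exponent computation with $m=\lceil 40/(3\eps^2)\rceil$. The differences are cosmetic: you discard all pairs inside $C_1$ and use only $G_1$-side freshness, whereas the paper discards only pairs inside $M_i=\{u\in C_1:\pi_i(u)\in C_2\}$, and both give the same bound; you also spell out why the unrevealed bits are conditionally fresh for $i=1$, which the paper glosses.
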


Lemma~\ref{lem:multi-OGP} follows directly from Claims~\ref{clm:big-tuples} and~\ref{clm:small-tuples}. 

\begin{proof}[Proof of Lemma~\ref{lem:multi-OGP}]
    By~\eqref{eq:triangle_ineq} and Claims~\ref{clm:big-tuples} and~\ref{clm:small-tuples}, we have that  
    \begin{align*}
        \sum_{t=1}^n \Pr(X_{m,t} \geq 1) &\leq \sum_{t=1}^n \Bigl(\Pr(W_{m,t} \geq 1) + \Pr(Z_{m,t} \geq 1)\Bigr) \notag 
        \leq \sum_{t=1}^n \Bigl(mn^{-\log_2 n + o(\log n)} + n^m n^{-\log_2 n + o(\log n)}\Bigr) \notag \\
        &\leq 2mn^{m+1} n^{-\log_2 n + o(\log n)} 
        \leq n^{-\log_2 n + o(\log n)},
    \end{align*}
    where we use that $a + b \leq 2ab$ for $a, b \geq 1$, and that $m$ is a constant, so $2m n^{m+1} \leq n^{o(\log n)}$.
\end{proof}

It remains to prove the claims. Claim~\ref{clm:big-tuples} follows from a straightforward union bound.

\begin{proof}[Proof of Claim~\ref{clm:big-tuples}]
    Note that $W_{m,t} \geq 1$ implies the existence of some $Y_i^{(t)} = \left(G_{i,1}^{(t)}, G_{i,2}^{(t)}\right)$ that contains a pair of subgraphs $H_1 \subseteq G_{i,1}^{(t)}, H_2 \subseteq G_{i,2}^{(t)}$ such that $H_1 \cong H_2$ and
    \begin{align*}
        k := |H_1| = |H_2| = \lceil 6\log_2 n \rceil.
    \end{align*}
    For any two subgraphs $H_1, H_2 \sim \mathbb{G}(k, 1/2)$, the probability that $H_1 \cong H_2$ is at most $(k!)2^{-\binom{k}{2}}$. Taking a union bound over all possible choices of $i \in [m]$ and subgraphs $H_1, H_2$, we see that
    \begin{align*}
        \Pr(W_{m,t} \geq 1) &\leq \sum_{i=1}^m \sum_{\substack{V(H_1) \subseteq [n], \, V(H_2) \subseteq [n] \\ |V(H_1)| = |V(H_2)| = k}} (k!) 2^{-\binom{k}{2}} 
        \leq m \binom{n}{k}^2 (k!) 2^{- \binom{k}{2}} \\
        &\leq m \left(n^22^{-(k-1)/2}\right)^k  
        \leq m \left(\sqrt{2} n^{-1}\right)^k 
        \leq m n^{-6\log_2 n + o(\log n)},
    \end{align*}
    where the third inequality uses that 
    $\binom{n}{k}^{2} k! \leq n^{2k}$, 
    the fourth inequality uses that $k \geq 6 \log_{2} n$ and hence $2^{-(k-1)/2} \leq \sqrt{2} n^{-3}$, 
    and the final inequality uses that $k \geq 6 \log_{2} n$ as well. 
\end{proof}

The proof of Claim~\ref{clm:small-tuples} is more involved. We use the following definition.
\begin{definition}\label{def:auxiliary-Z}
    For any $m \in \N$ and $\avec = (\alpha_1, \dots, \alpha_m) \in [2 + \eps, 6]^m$, let $Z_{\avec, m, t}$ count the number of $m$-tuples $\{\sigma_1, \dots, \sigma_m\}$ that satisfy Definition~\ref{def:forbidden-structures} and have $|\sigma_i| = \alpha_i \log_2 n$ for all $i \in [m]$.
\end{definition}
Note that we only consider $\avec$ such that $\avec \log_2 n \in [n]^m$. 
By definition, we have that 
\[
Z_{m,t} = \sum_{\substack{\avec \in [2+\eps, 6]^m \\ \avec \log_2 n \in [n]^m}} Z_{\avec, m, t},
\]
and so by a union bound we have that 
\[
\Pr(Z_{m,t} \geq 1) 
\leq \sum_{\substack{\avec \in [2+\eps, 6]^m \\ \avec \log_2 n \in [n]^m}} \Pr \left( Z_{\avec, m, t} \geq 1 \right).
\] 
Let $\alpha_{\max} := \max_{i \in [m]} \alpha_{i}$ 
and recall that $L = \log_{2} n$. 
Observe that if $t$ is such that 
$n-t < \alpha_{\max} L - \left\lceil \gamma L \right\rceil$, 
then 
$Z_{\avec, m, t} = 0$ deterministically. 
To see this, let $i^{*}$ be such that $\alpha_{i^{*}} = \alpha_{\max}$. 
If Property~\ref{prop:two} of Definition~\ref{def:forbidden-structures} holds, 
then the solution $\sigma_{i^{*}}$ at time $n$ can have size at most 
$\left\lceil \gamma L \right\rceil + (n-t)$, 
so in order for the solution $\sigma_{i^{*}}$ at time $n$ to have size $\alpha_{i^{*}} L = \alpha_{\max} L$, 
we must have that 
$\left\lceil \gamma L \right\rceil + (n-t) \geq \alpha_{\max} L$.
Thus when analyzing $Z_{\avec, m, t}$, we may assume that $t$ satisfies 
$\left\lceil \gamma L \right\rceil + (n-t) \geq \alpha_{\max} L$, 
or in other words that 
\begin{equation}\label{eq:t_condition_alphamax}
t \leq n - (\alpha_{\max} L - \left\lceil \gamma L \right\rceil).
\end{equation}
Applying Markov's inequality thus yields that 
\begin{equation}\label{eq:Z_bound}
    \Pr(Z_{m,t} \geq 1) 
    \leq \sum_{\substack{\avec \in [2+\eps, 6]^m, \, \avec L \in [n]^m \\ t \leq n - (\alpha_{\max} L - \left\lceil \gamma L \right\rceil)}} \Pr \left( Z_{\avec, m, t} \geq 1 \right)
    \leq \sum_{\substack{\avec \in [2+\eps, 6]^m, \, \avec L \in [n]^m \\ t \leq n - (\alpha_{\max} L - \left\lceil \gamma L \right\rceil)}} \E \left[Z_{\avec, m, t} \right].
\end{equation}
We proceed to bound $\E \left[ Z_{\avec, m, t} \right]$ for $t$ such that~\eqref{eq:t_condition_alphamax} holds. 
Lemma~\ref{lem:Z-counting} and Lemma~\ref{lem:Z-probability} handle the counting and the probability estimates, respectively.

\begin{lemma}\label{lem:Z-counting}
    Recall that $\gamma = 2 - \eps/2$ and $L = \log_2 n $. For any $m \in \N$, $\avec \in [2+\eps, 6]^m$, and $\left(P_1^{(t)}, P_2^{(t)}\right) \subseteq [n]^2$, the total number of $m$-tuples $\{\sigma_1, \dots, \sigma_m\}$ that satisfy Property~\ref{prop:two} in Definition~\ref{def:forbidden-structures} and have $|\sigma_i| = \alpha_i \log_2 n$ for all $i \in [m]$ is at most
    \begin{align*}
        n^{2L \left[\gamma + \sum_{i=1}^m(\alpha_i - \gamma)\right]}.
    \end{align*}
\end{lemma}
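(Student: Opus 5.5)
The proof is a direct counting argument, carried out separately in each of the two graphs and then multiplied. An $m$-tuple satisfying Property~\ref{prop:two} with $|\sigma_i| = \alpha_i L$ is determined by the vertex sets $V(H_{i,j})$ for $i \in [m]$ and $j \in \{1,2\}$. Here "solution" is understood as the pair of induced subgraphs, so no bijection needs to be counted at this stage; the isomorphisms $\pi_i$ are accounted for later, in the probability estimate. So it suffices to show that, for each fixed $j$, the number of $m$-tuples of vertex sets $(V(H_{1,j}),\dots,V(H_{m,j}))$ compatible with Property~\ref{prop:two} and the prescribed sizes is at most $n^{L[\gamma + \sum_i(\alpha_i-\gamma)]}$. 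Squaring this bound over $j \in \{1,2\}$ then gives the claim.

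Fix $j$. By Property~\ref{prop:two}, all the sets share the common core $C_j := V(H_{1,j}) \cap P_j^{(t)} = V(H_{i,j}) \cap P_j^{(t)}$, which is a subset of $P_j^{(t)} \subseteq [n]$ of size exactly $\lceil \gamma L \rceil$. There are at most $\binom{n}{\lceil \gamma L\rceil} \le n^{\lceil \gamma L \rceil}$ choices for it. Given $C_j$, each set $V(H_{i,j})$ equals $C_j$ together with a "petal" $V(H_{i,j}) \setminus C_j$. This petal lies in $[n]\setminus P_j^{(t)}$ and has size $\alpha_i L - \lceil \gamma L\rceil$, so it can be chosen in at most $n^{\alpha_i L - \lceil \gamma L \rceil}$ ways, independently across $i$. (If $\alpha_i L < \lceil \gamma L \rceil$ there are no valid choices, but this cannot happen since $\alpha_i \ge 2+\eps > \gamma$ and $\alpha_i L$ is an integer.) Multiplying gives at most
\[
n^{\lceil \gamma L\rceil + \sum_{i=1}^m (\alpha_i L - \lceil \gamma L\rceil)} .
\]

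The only point requiring care is the ceiling. Since $\lceil \gamma L \rceil \ge \gamma L$ and $m \ge 1$, the exponent equals
\[
\sum_i \alpha_i L - (m-1)\lceil \gamma L\rceil \le \sum_i \alpha_i L - (m-1)\gamma L = L\Bigl[\gamma + \sum_i(\alpha_i - \gamma)\Bigr].
\]
This gives the per-graph bound. Taking the product over $j \in \{1,2\}$ yields $n^{2L[\gamma + \sum_i(\alpha_i-\gamma)]}$.

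Two minor loose ends remain. First, unordered versus ordered tuples: counting ordered tuples only overcounts, so it is harmless for an upper bound. Second, the bound holds uniformly in $(P_1^{(t)},P_2^{(t)})$, since we only used $C_j \subseteq [n]$ and petals $\subseteq [n]$. I expect no real obstacle here; the lemma is essentially bookkeeping, with the ceiling inequality being the one step to check.
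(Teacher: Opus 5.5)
Your proposal is correct and follows essentially the same route as the paper. For each $j \in \{1,2\}$ you bound the number of choices of the core $C_j$ by $n^{\lceil \gamma L\rceil}$ and of each petal by $n^{\alpha_i L - \lceil \gamma L \rceil}$, then multiply over the two graphs; your explicit check of the last step, $\sum_i \alpha_i L - (m-1)\lceil \gamma L\rceil \le \sum_i \alpha_i L - (m-1)\gamma L$ for $m \ge 1$, fills in the one inequality the paper states without comment.
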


\begin{proof}
    Recall that $\sigma_i = (H_{i,1}, H_{i,2})$. 
    By Property~\ref{prop:two} in Definition~\ref{def:forbidden-structures}, 
    for each $i \in [m]$ and $j \in \{1,2\}$ 
    we have that $V(H_{i,j}) \cap P_j^{(t)} = V(H_{1,j}) \cap P_j^{(t)}$ and $|V(H_{1,j}) \cap P_j^{(t)}| = \lceil \gamma L\rceil$. To bound the number of $m$-tuples counted in the lemma statement, we pick any $j \in \{1, 2\}$ and first note that the number of ways to choose the set of common vertices $V(H_{1,j}) \cap P_j^{(t)}$ is at most
    \begin{align*}
        \binom{ \left| P_j^{(t)} \right|}{\lceil \gamma L \rceil} 
        \leq \binom{n}{\lceil \gamma L \rceil} 
        \leq n^{\lceil \gamma L \rceil}.
    \end{align*}
    Since the size of each $H_{i,j}$ is $\alpha_i L$, the number of ways to choose the remaining vertices in each $H_{i, j}$ is at most
    \begin{align*}
        \prod_{i=1}^m \binom{n}{\alpha_i L - \lceil \gamma L \rceil} 
        \leq \prod_{i=1}^m n^{\alpha_i L - \lceil \gamma L \rceil} 
        = n^{\sum_{i=1}^m (\alpha_i L - \lceil \gamma L \rceil)}.
    \end{align*}
    As such, counting each $j \in \{1,2\}$ separately, we can upper bound the count by
    \begin{align*}
        n^{2\lceil \gamma L \rceil + 2\sum_{i=1}^m(\alpha_iL - \lceil \gamma L \rceil)} \leq n^{2L[\gamma + \sum_{i=1}^m(\alpha_i - \gamma)]}
    \end{align*}
    and the result follows.
\end{proof}

\begin{lemma}\label{lem:Z-probability}
    Recall that $\gamma = 2 - \eps/2$ and $L = \log_2 n$. 
    Let $m \in \N$ and $\avec \in [2+\eps, 6]^m$, and define 
    \begin{align*}
        \Phi := \sum_{i=1}^m \left(\frac{\alpha_i^2 - \gamma^2}{2}\right).
    \end{align*}    
    Let $t$ be such that~\eqref{eq:t_condition_alphamax} holds. 
    Let $\Gamma^{(t)}$ be the random variable that encodes all the information processed by $\Alg$ through step $t$. 
    Then, for any 
    $m$-tuple $\{\sigma_1, \dots, \sigma_m\}$ that satisfies 
    Property~\ref{prop:two} in Definition~\ref{def:forbidden-structures} 
    and has $|\sigma_i| = \alpha_iL$ for all $i \in [m]$, 
    we have that
    \begin{align*}
        \Pr\left(\sigma_i \text{ is a common induced subgraph of } Y_{i}^{(t)} \text{ for all } i \in [m] \, \middle| \, \Gamma^{(t)}\right) 
        \leq n^{-\Phi L + C m \log L}
    \end{align*}
    for an absolute constant $C < \infty$.
\end{lemma}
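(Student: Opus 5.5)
The plan is to union bound over bijections and then use conditional independence across the $m$ inputs given $\Gamma^{(t)}$. Write $\sigma_i = (H_{i,1}, H_{i,2})$ and $C_j := V(H_{1,j}) \cap P_j^{(t)}$, so that $|C_j| = \lceil \gamma L \rceil$ for $j \in \{1,2\}$. The event that $\sigma_i$ is a common induced subgraph of $Y_i^{(t)}$ is the union, over the at most $(\alpha_i L)! \le (6L)^{6L} = n^{O(\log L)}$ bijections $\pi_i : V(H_{i,1}) \to V(H_{i,2})$, of the events that $\pi_i$ is an isomorphism. Since, by Remark~\ref{rem:interpret-interpolation}, the inputs $Y_1^{(t)}, \dots, Y_m^{(t)}$ are conditionally independent given $\Gamma^{(t)}$, I will bound the conditional probability by
\[
\prod_{i=1}^m \sum_{\pi_i} \Pr\left(\pi_i \text{ is an isomorphism} \, \middle| \, \Gamma^{(t)}\right).
\]
The bijection count contributes $n^{Cm\log L}$ in total. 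It therefore suffices to show that, for each fixed $i$ and $\pi_i$, the conditional probability is at most $2^{-\binom{\alpha_i L}{2} + \binom{\lceil\gamma L\rceil}{2}}$. Summing $\binom{\alpha_iL}{2} - \binom{\lceil \gamma L\rceil}{2}$ over $i$ gives $\Phi L^2 - O(mL)$, and the $O(mL)$ error is absorbed into $n^{Cm\log L}$.

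For fixed $i$ and $\pi_i$, I use the constraints $e_{i,1}(\{u,u'\}) = e_{i,2}(\{\pi_i(u), \pi_i(u')\})$, one for each $\{u,u'\} \in \binom{V(H_{i,1})}{2}$. A pair is called \emph{free} if at least one of its two sides is not contained in $\binom{P_j^{(t)}}{2}$. Conditioned on $\Gamma^{(t)}$, every edge status outside $\binom{P_j^{(t)}}{2}$ is an i.i.d.\ $\mathrm{Bern}(1/2)$ bit; for $i = 1$ this uses that $Y$'s unrevealed edges are independent of $\Gamma^{(t)}$. Each constraint involves two distinct edge slots, one in each graph, and distinct constraints use disjoint slots because $\pi_i$ is a bijection. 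Hence every free constraint holds with probability $1/2$, independently of all the others, and the probability is at most $2^{-\#\text{free}}$.

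It remains to count the non-free pairs. A non-free pair $\{u,u'\}$ has $u, u' \in V(H_{i,1}) \cap P_1^{(t)} = C_1$, so there are at most $\binom{\lceil \gamma L\rceil}{2}$ of them, and hence at least $\binom{\alpha_iL}{2} - \binom{\lceil\gamma L\rceil}{2}$ pairs are free. Here the overlap property (Property~\ref{prop:two}) is essential, because it says the part of the solution lying in the processed region is exactly $C_j$. The condition~\eqref{eq:t_condition_alphamax} is not actually needed for this bound; it only guarantees that the structures counted are nonempty.

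The main obstacle is the independence bookkeeping. One must make sure the free bits are genuinely uniform given $\Gamma^{(t)}$, including for $Y_1^{(t)} = Y$, where the processed sets $P_j^{(t)}$ are themselves random but are determined by $\Gamma^{(t)}$. One must also check that a constraint with one side fixed by $\Gamma^{(t)}$ and the other side free still costs a factor of $1/2$. Both points follow from~\eqref{def:same-edges} and~\eqref{def:iid-edges}, together with the fact that the algorithm's choice of which vertices to reveal depends only on the bits already revealed.
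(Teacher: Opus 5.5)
Your proposal is correct and follows essentially the same route as the paper. Both arguments take a union bound over the at most $(\alpha_i L)! \le n^{O(\log L)}$ bijections, use conditional independence of the $m$ inputs given $\Gamma^{(t)}$, and observe that every pair not lying entirely in the revealed region on both sides contributes an independent factor $1/2$, which gives at least $\binom{\alpha_i L}{2} - \binom{\lceil \gamma L\rceil}{2}$ such factors. The differences are minor: the paper bounds the non-free pairs by $\binom{M_i}{2}$ with $M_i = \{u \in C_1 : \pi_i(u) \in C_2\}$ rather than your cruder $\binom{C_1}{2}$ (same final bound), and your explicit handling of the disjoint edge slots and of the input $Y_1^{(t)} = Y$ is slightly more careful than the paper's.
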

\begin{proof}
    First note that, as explained after Definition~\ref{def:auxiliary-Z}, the condition that $t$ satisfies~\eqref{eq:t_condition_alphamax} is necessary in order for there to exist an $m$-tuple $\{\sigma_1, \dots, \sigma_m\}$ that satisfies 
    Property~\ref{prop:two} in Definition~\ref{def:forbidden-structures} 
    and has $|\sigma_i| = \alpha_iL$ for all $i \in [m]$. 
    (Otherwise, if $t$ is such that~\eqref{eq:t_condition_alphamax} does not hold, then there is no such $m$-tuple $\{\sigma_1, \dots, \sigma_m\}$, and hence the claim is vacuously true.)

    For $j \in \{1,2\}$, define the set of common vertices
    \begin{align*}
        C_j := V(H_{1,j}) \cap P_j^{(t)}.
    \end{align*}
    Since $\{\sigma_1, \dots, \sigma_m\}$ satisfies Property~\ref{prop:two} from Definition~\ref{def:forbidden-structures}, we have that $|C_j| = \lceil \gamma L \rceil$ and $C_j \subseteq V(H_{i,j})$ for all $i \in [m]$. Consider a bijection $\pi_i: V(H_{i,1}) \to V(H_{i,2})$, for any $i \in [m]$. We will analyze the probability that $\pi_i$ is a graph isomorphism, conditioned on $\Gamma^{(t)}$. Define
    \begin{align*}
        M_i := \{u \in C_1: \pi_i(u) \in C_2\}
    \end{align*}
    to be the set of common vertices in $H_{i,1}$ that get mapped by $\pi_i$ to common vertices in $H_{i,2}$. Note that $|M_i| \leq \lceil \gamma L\rceil$. Furthermore, define $I_{i}(u,u')$ to be the indicator that $\pi_i$ maintains the edge status of $(u, u')$, that is,
    \begin{align*}
        I_{i}(u,u') := \mathbf{1}{\left\{e_{i,1}^{(n)}(\{u,u'\}) = e_{i,2}^{(n)}(\{\pi_i(u), \pi_i(u')\}) \right\}}.
    \end{align*}
    Observe that $\pi_{i}$ is a graph isomorphism if and only if $I_{i}(u,u') = 1$ for all $(u,u') \in \binom{V(H_{i,1})}{2}$.     
    Recall that $\Gamma^{(t)}$ determines the edge statuses of all $(u,u') \in \binom{M_{i}}{2}$ and their respective pairs $(\pi_i(u), \pi_i(u'))$, so for such pairs we have that $\Pr\left(I_{i}(u, u') = 1 \, \middle| \, \Gamma^{(t)}\right) \in \{0,1\}$. 
    In particular, if $\Gamma^{(t)}$ is such that $I_{i}(u,u') = 0$ for some $(u,u') \in \binom{M_{i}}{2}$, 
    then 
    $\Pr\left(\pi_i \text{ is a graph isomorphism} \, \middle| \, \Gamma^{(t)}\right) = 0$. 
    However, in the argument that follows we will not use anything about $\Gamma^{(t)}$ --- merely that $t$ satisfies~\eqref{eq:t_condition_alphamax}. 
    Observe that if 
    $(u,u') \in \binom{V(H_{i,1})}{2} \setminus \binom{M_{i}}{2}$, 
    then either $(u,u') \notin \binom{P_{1}^{(t)}}{2}$ 
    or $(\pi_{i}(u), \pi_{i}(u')) \notin \binom{P_{2}^{(t)}}{2}$ 
    (or both). 
    Therefore,~\eqref{def:iid-edges} in Definition~\ref{def:interpolation-paths} ensures that 
    $\left\{ I_{i}(u,u') : (u,u') \in \binom{V(H_{i,1})}{2} \setminus \binom{M_{i}}{2} \right\}$ 
    are i.i.d.\ $\mathrm{Bern}(1/2)$ random variables. 
    Note further that\footnote{The equalities and inequalities in this display implicitly use that $t$ satisfies~\eqref{eq:t_condition_alphamax}, which guarantees the existence of an $m$-tuple $\{\sigma_{1}, \ldots, \sigma_{m}\}$ with appropriate properties, in particular that $|\sigma_{i}| = \alpha_{i} L$ and $|M_{i}| \leq \left\lceil \gamma L \right\rceil$.}  
    \[
    \left| \binom{V(H_{i,1})}{2} \setminus \binom{M_{i}}{2} \right| 
    = \binom{\alpha_i L}{2} - \binom{|M_i|}{2}
    \geq \binom{\alpha_i L}{2} - \binom{\lceil \gamma L \rceil}{2}
    \geq \frac{\alpha_{i}^{2} - \gamma^{2}}{2} L^{2} - 5 L.
    \]
    Thus we have that 
    \begin{align*}
        \Pr\left(\pi_i \text{ is a graph isomorphism} \, \middle| \, \Gamma^{(t)}\right) 
        &\leq \prod_{(u, u') \in \binom{V(H_{i,1})}{2} \setminus \binom{M_{i}}{2}} \Pr\left(I_{i}(u,u') = 1 \, \middle| \, \Gamma^{(t)}\right) \\
        &= 2^{-\left( \binom{\alpha_i L}{2} - \binom{|M_i|}{2}\right)} 
        \leq n^{-L(\alpha_i^2 - \gamma^2)/2 + 5}.
    \end{align*}
    Remark~\ref{rem:interpret-interpolation} implies that the event $H_{i,1} \cong H_{i,2}$ is independent for each $i \in [m]$, conditioned on $\Gamma^{(t)}$. Therefore, taking a union bound over all bijections $\pi_i$, we have that
    \begin{multline*}
        \Pr\left(\sigma_i \text{ is a common induced subgraph of } Y_{i}^{(t)} \text{ for all } i \in [m] \, \middle| \, \Gamma^{(t)}\right) \\
        \begin{aligned}
        &\leq \prod_{i=1}^m \sum_{\pi_i} \Pr\left(\pi_i \text{ is an isomorphism} \, \middle| \, \Gamma^{(t)}\right) 
        \leq \prod_{i=1}^m n^{-L(\alpha_i^2 - \gamma^2)/2 + 5} (\alpha_i L)! \\
        &\leq \prod_{i=1}^m n^{-L(\alpha_i^2 - \gamma^2)/2 + C \log L} 
        = n^{-L \left(\sum_{i=1}^m(\alpha_i^2 - \gamma^2)/2\right) + C m \log L}
        \end{aligned}
    \end{multline*}
for an absolute constant $C < \infty$, 
where we used that 
$(\alpha_{i} L)! \leq (6L)! \leq (6L)^{6L} \leq n^{C \log L}$.
\end{proof}

\begin{proof}[Proof of Claim~\ref{clm:small-tuples}]
    Recall that $\gamma = 2 - \eps/2$ and $L = \log_2 n$. 
    For $\avec \in [2 + \eps, 6]^m$, define 
    \[
    f(\alpha_i, \gamma) := \left(\frac{\alpha_i^2}{2} - 2 \alpha_i\right) - \left(\frac{\gamma^2}{2} - 2\gamma\right)
    \]
    and 
    \[
    \Psi(\avec) := -2\gamma +\sum_{i = 1}^m f(\alpha_i, \gamma). 
    \]

    For any $\avec \in [2 + \eps, 6]^m$ with $\avec L \in [n]^m$ and $t$ satisfying~\eqref{eq:t_condition_alphamax}, Lemmas~\ref{lem:Z-counting} and~\ref{lem:Z-probability} imply that
    \begin{align*}
        \E \left[ Z_{\avec, m, t} \right] 
        &= \E\left[ \E \left[ Z_{\avec, m, t} \, \middle| \, \Gamma^{(t)} \right] \right] 
        \leq \E \left[ n^{2L \left[\gamma + \sum_{i=1}^m(\alpha_i - \gamma)\right]} \cdot n^{-\Phi L + C m \log L} \right] \\
        &= n^{-L \Psi(\avec) + C m \log L}
        = n^{-L \Psi(\avec) + o(L)}.
    \end{align*}
    Plugging this estimate into~\eqref{eq:Z_bound}, we get that 
    \[
    \Pr(Z_{m,t} \geq 1) 
    \leq \sum_{\substack{\avec \in [2+\eps, 6]^m, \, \avec L \in [n]^m \\ t \leq n - (\alpha_{\max} L - \left\lceil \gamma L \right\rceil)}} \E \left[Z_{\avec, m, t} \right] 
    \leq n^{m} \cdot n^{-L \Psi(\avec) + o(L)},
    \]
    so what remains is to show that $\Psi(\avec) \geq 1$ for $\avec \in [2 + \eps, 6]^m$. 
    
    We first note that
    \begin{align*}
        \gamma^2 / 2 - 2\gamma = \eps^{2} / 8 - 2.
    \end{align*}
    Moreover, since for $\alpha_i \geq 2 + \eps$ the expression $\alpha_i^2/2 - 2\alpha_i$ is minimized at $\alpha_i = 2 + \eps$, we have that
    \begin{align*}
        \min_{\alpha_i \in [2 + \eps, 6]} \left\{\alpha_i^2/2 - 2\alpha_i \right\} = (2+ \eps)^2/2 - 2(2 + \eps) = \eps^2/2 - 2.
    \end{align*}
    Putting the previous two displays together, we thus have that 
    \begin{align*}
        f(\alpha_i, \gamma) \geq \left(\frac{\eps^2}{2} - 2\right) - \left(\frac{\eps^2}{8} - 2\right) = \frac{3\eps^2}{8}.
    \end{align*}
    By our choice of $m \geq 40/(3\eps^2)$ in~\eqref{def:m}, we have that 
    \begin{align*}
        \Psi(\avec) &= -2\gamma + \sum_{i=1}^m f(\alpha_i, \gamma) \geq -4 + m\left(\frac{3\eps^2}{8}\right) \geq -4 + \left(\frac{40}{3\eps^2}\right)\left(\frac{3\eps^2}{8}\right) = 1,
    \end{align*}
    which concludes the proof. 
\end{proof}

\section*{Acknowledgements}

This project was supported in part by the NSF-funded Institute for Data, Econometrics, Algorithms and Learning (IDEAL) through the grants NSF ECCS-2216970 and ECCS-2216912. We particularly thank the IDEAL Institute and Varun Gupta for developing and running the IDEAL Summer Research Exchange, where this work started in the summer of 2025. 
We are grateful to Vaidehi Srinivas for insightful and inspiring discussions concerning average-case reductions between large cliques and large common induced subgraphs.  
We also thank anonymous reviewers for helpful feedback.

\printbibliography

\end{document}